\theoremstyle{plain} %plain, definition, remark
\newtheorem{theorem}{Theorem}[]
\newtheorem{proposition}{Proposition}[]
\newtheorem{definition}{Definition}[]
\newtheorem{remark}{Remark}
\DeclareMathOperator{\Tr}{Tr}
\DeclareMathOperator*{\argmax}{arg\,max}
\begin{document}

\title{\vspace{0mm}
\begin{spacing}{1.2}
Joint Optimization of Power and Data \\Transfer in Multiuser MIMO Systems\end{spacing}
}

\author{\vspace{0mm}\IEEEauthorblockN{Javier Rubio$^\star$, Antonio Pascual-Iserte$^\star$, Daniel P. Palomar$^\dagger$, and Andrea Goldsmith$^\ddagger$\\}
\IEEEauthorblockA{$^\star$Universitat Polit\`ecnica de Catalunya (UPC), Barcelona, Spain\\
 $^\dagger$Hong Kong University of Science and Technology (HKUST), Hong Kong
$^\ddagger$Stanford University, CA, USA\\
Emails:\{javier.rubio.lopez, antonio.pascual\}@upc.edu, palomar@ust.hk, andrea@wsl.stanford.edu\vspace{-7mm}} 
\thanks{The research leading to these results has received funding from the European Commission in the framework of the FP7 Network of Excellence in Wireless COMmunications NEWCOM\# (Grant agreement no. 318306), from the Spanish Ministry of Economy and Competitiveness (Ministerio de Econom\'ia y Competitividad) through the project TEC2011-29006-C03-02 (GRE3N-LINK-MAC), project TEC2013-41315-R (DISNET), and FPI grant BES-2012-052850, from the Catalan Government (AGAUR) through the grant 2014 SGR 60, from the Hong Kong Government through the research grant Hong Kong RGC 16207814, and from the NSF Center for Science of Information (CSoI): NSF-CCF-0939370.}
}

\maketitle
\begin{abstract}
We present an approach to solve the nonconvex optimization problem that arises when designing the transmit covariance matrices in multiuser multiple-input multiple-output (MIMO) broadcast networks implementing simultaneous wireless information and power transfer (SWIPT). The MIMO SWIPT problem is formulated as a general multi-objective optimization problem, in which data rates and harvested powers are optimized simultaneously. Two different approaches are applied to reformulate the (nonconvex) multi-objective problem. In the first approach, the transmitter can control the specific amount of power to be harvested by power transfer whereas in the second approach the transmitter can only control the proportion of power to be harvested among the different harvesting users. The computational complexity will also be different, with higher computational resources required in the first approach. In order to solve the resulting formulations, we propose to use the majorization-minimization (MM) approach. The idea behind this approach is to obtain a convex function that approximates the nonconvex objective and, then, solve a series of convex subproblems that will converge to a locally optimal solution of the general nonconvex multi-objective problem. The solution obtained from the MM approach is compared to the classical block-diagonalization (BD) strategy, typically used to solve the nonconvex multiuser MIMO network by forcing no interference among users. Simulation results show that the proposed approach improves over the BD approach both the system sum rate and the power harvested by users. Additionally, the computational times needed for convergence of the proposed methods are much lower than the ones required for classical gradient-based approaches. 
\end{abstract}

\section{Introduction}
\IEEEPARstart{S}{imultaneous} wireless information and power transfer (SWIPT) is a technique by which a transmitter actively feeds a receiver (or a set of receivers) with power that is sent through radio frequency (RF) signals and, simultaneously, sends useful information to the same or different receivers \cite{lu:15}. By harvesting this transmitted energy, battery-constrained mobile terminals are able to recharge their batteries and, thus, prolong their operation time \cite{paradiso:05}. Although there are many different harvesting techniques used to power devices, such as solar or wind, SWIPT technology provides an appealing solution since the transmitter can control the amount of energy that the mobile terminals need to keep alive. Historically, due to the high attenuation of microwave signals over distance, SWIPT techniques were only considered in low-power devices, such as RFID tags \cite{bi:15}. Nevertheless, recent advances in antenna technologies and RF harvesting circuits have enabled energy to be transferred and harvested much more efficiently \cite{lu:15, bi:15}.

The concept of SWIPT was first studied from a theoretical point of view by Varshney \cite{varshney:08}. He showed that, for the single-antenna additive white Gaussian noise (AWGN) channel, there exists a nontrivial trade-off in maximizing the data rate versus the power transmission. In \cite{zhang:13b}, the authors considered a multiple-input multiple-output (MIMO) scenario with one transmitter capable of transmitting information and power simultaneously to one receiver. Later, in \cite{rub:13d}, the authors extended the work in \cite{zhang:13b} by considering that multiple users were present in the broadcast MIMO system. However, the multi-stream transmit covariance optimization that arises in broadcast MIMO systems is a very difficult nonconvex optimization problem. In order to overcome that difficulty, authors in \cite{rub:13d} considered a block-diagonalization (BD) strategy \cite{spencer:04}, in which interference is pre-canceled at the transmitter. The BD technique allows for a simple solution but wastes some degrees of freedom and, thus, degrades the overall performance. Works \cite{park:13} and \cite{park:14} considered a MIMO network consisting of multiple transmitter-receiver pairs with co-channel interference. The study in \cite{park:13} focused on the case with two transmitter-receiver pairs whereas in \cite{park:14}, the authors generalized \cite{park:13} by considering that $k$ transmitter-receivers pairs were present. The work in \cite{zong:16} considered a MIMO system with single-stream transmission. In contrast to previous works where the system rate was optimized, the objective was to minimize the overall power consumption with per-user signal to interference and noise ratio (SINR) constraints and harvesting constraints. The design of multiuser broadcast networks under the framework of multiple-input single-output (MISO) beamformimg optimization has also been addressed in works such as \cite{xu:14} and \cite{shi:14}. 

There exist two approaches in the literature that deal with the nonconvex optimization of the transmit covariance matrices in multiuser multi-stream MIMO networks. The first is based on the duality principle \cite{vishwanath:03}. In \cite{gui:15}, authors applied that principle to obtain the beamforming optimization solution for the multiuser MIMO SWIPT broadcast channel. However, that work considered an overall (sum) harvesting constraint instead of individual per-user harvesting constraints. The second approach is based on the minimization of the mean square error (MSE) \cite{christensen:08}. However, this technique cannot be applied to the SWIPT framework due to fact that the resulting problem remains nonconvex.  

The main difference of our work with respect to the previous works described above is that we assume a broadcast multiuser multi-stream (non BD-based) MIMO SWIPT network, in which (per-user) harvested power and information transfer must be optimized simultaneously. We model our transmitter design as a multi-objective problem in which the scenarios studied in \cite{zhang:13b} and \cite{rub:13d} are shown to be particular solutions of the proposed framework. Additionally, we assume that interference is not pre-canceled (i.e., the BD approach is not applied) and, thus, both larger information transfer and harvested power can be achieved simultaneously. The resulting problem is nonconvex and very difficult to solve. In order to obtain local solutions, we derive different methods based on majorization-minimization (MM) techniques. By means of this strategy, we are able to reformulate our original nonconvex problem into a series of convex subproblems that are easily solved (i.e., through algorithms that have a very low computational complexity) and whose solutions converge to a locally optimal solution of the original nonconvex problem. 

The remainder of this paper is organized as follows. In Section \ref{sec_math_pre}, we introduce a summary of the mathematical techniques employed in this paper. In Section \ref{sec_sysm} we present the system and signal models and the problem formulation. In Section \ref{sec_mm_app} we derive the mathematical modeling required to reformulate the original nonconvex problem into convex subproblems that are solved using the MM approach. In Section \ref{sec_num_mm}, we evaluate the performance of the proposed methods and, finally, in Section \ref{sec_con_mm}, we draw some conclusions.
 
\emph{Notation:} We adopt the notation of using boldface lower case for vectors ￼$\textbf{x}$ and upper case for matrices $\textbf{X}$. The transpose, conjugate transpose (hermitian), and inverse operators are denoted by the superscripts $(\cdot)^T$￼, $(\cdot)^H$￼￼￼￼, and $(\cdot)^{-1}$￼, respectively. $\Tr(\cdot)$ and $\det(\cdot)$ denote the trace and the determinant of a matrix, respectively. $\text{vec}(\textbf{X})$ is a column vector resulting from stacking all columns of $\textbf{X}$. We use $\textbf{X}$ to denote the $N-$tuple $\textbf{X} \triangleq (\textbf{X}_i)^N_{i=1} = (\textbf{X}_1,\dots,\textbf{X}_N)$ and $||\cdot||_F$ to denote the matrix Frobenius norm.
\section{Mathematical Preliminaries}
\label{sec_math_pre}
\subsection{Multi-Objective Optimization}
\label{sec-mop}
Multi-objective optimization (also known as multi-criteria optimization or vector optimization) is a type of optimization that involves multiple objective functions that are optimized simultaneously \cite{ehrgott}. For a nontrivial multi-objective problem, in general, there does not exist a single solution that simultaneously optimizes each objective. In that case, the objective functions are said to be conflicting, and there exists a (possibly infinite) number of Pareto optimal solutions. A solution is called Pareto optimal if none of the objective functions can be improved in value without degrading some of the other objective values.

\noindent\emph{1) Definitions}

\begin{definition}[\cite{ehrgott}]A multi-objective problem can be formally expressed as
\begin{alignat}{2}
&\mathop{\text{maximize}}_{\textbf{x}} \quad  \textbf{f}(\textbf{x}) = (f_1(\textbf{x}), \dots, f_K(\textbf{x})) \label{op:mop1} \\
&\textrm{subject to} \quad \textbf{x} \in \mathcal{X},\nonumber
\end{alignat}
where $f_k : \mathbb{C}^N \rightarrow \mathbb{R}$ for $k=1, \dots, K$ and $\mathcal{X}$ is the feasible set that represents the constraints. Let $\mathcal{Y}$ be the set of all attainable points for all feasible solutions, i.e., $\mathcal{Y} = \textbf{f}(\mathcal{X})$.
\end{definition}
\noindent\emph{2) Efficient Solutions}

\begin{definition}[\cite{ehrgott}, Definition 2.1]
A point $\textbf{x}\in\mathcal{X}$ is called Pareto optimal if there is no other $\textbf{x}'\in\mathcal{X}$ such that \emph{$\textbf{f}(\textbf{x}') \succeq \textbf{f}(\textbf{x})$}, where $\succeq$ refers to the component-wise inequality, i.e., $f_i(\textbf{x}')\ge f_i(\textbf{x})$, $i=1,\dots,K$.
\end{definition}

Sometimes, ensuring Pareto optimality for some problems is difficult. Due to this, the condition of optimality can be relaxed as follows.

\begin{definition}[\cite{ehrgott}, Definition 2.24]
A point $\textbf{x}\in\mathcal{X}$ is called weakly Pareto optimal (or weakly efficient) if there is no other $\textbf{x}'\in\mathcal{X}$ such that \emph{$\textbf{f}(\textbf{x}') \succ \textbf{f}(\textbf{x})$}, where $\succ$ refers to the strict component-wise inequality, i.e., $f_i(\textbf{x}')> f_i(\textbf{x})$, $i=1,\dots,K$. All Pareto optimal solutions are also weakly Pareto optimal.

\end{definition}

\noindent\emph{3) Finding Pareto Optimal Points}

There are several methods for finding the Pareto points of a multi-objective problem. In the sequel, we present three different (scalarization) techniques.

\noindent\emph{3.1) Weighted sum method:} the simplest scalarization technique is the weighted sum method which collapses the vector-objective into a single-objective component sum:
\begin{alignat}{2}
&\mathop{\text{maximize}}_{\textbf{x}\in \mathcal{X}} \quad  \sum_{k=1}^K \beta_kf_k(\textbf{x}), \label{op:mop2} 
\end{alignat}
where $\beta_k$ are real non-negative weights. The following results present the relation between the optimal solutions of \eqref{op:mop2} and the Pareto optimal points of the original problem \eqref{op:mop1}.

\begin{proposition}[\cite{ehrgott}, Proposition 3.9]
Suppose that ${\textbf{x}^\star}$ is an optimal solution of \eqref{op:mop2}. Then, ${\textbf{x}}^\star$ is weakly efficient.
\end{proposition}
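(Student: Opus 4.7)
The plan is to prove the contrapositive via a direct contradiction argument. Since the weighted sum problem collapses the vector objective into a scalar, any improvement in every component must translate into an improvement of the aggregate, provided the weights are not all zero. Throughout, I will tacitly use the standard implicit assumption on the weights, namely that at least one $\beta_k$ is strictly positive (otherwise the scalarized problem is vacuous and the conclusion fails).

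First, I would suppose, for the sake of contradiction, that $\mathbf{x}^\star$ is \emph{not} weakly efficient. By Definition~3 (weak Pareto optimality), this supposition yields some feasible $\mathbf{x}' \in \mathcal{X}$ satisfying the strict component-wise inequality
\begin{equation}
f_i(\mathbf{x}') > f_i(\mathbf{x}^\star), \qquad i = 1,\dots,K. \nonumber
\end{equation}

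Next, I would multiply each inequality by the corresponding non-negative weight $\beta_i$. For indices with $\beta_i = 0$, the product inequality becomes an equality $0 = 0$, while for indices with $\beta_i > 0$ the strict inequality is preserved: $\beta_i f_i(\mathbf{x}') > \beta_i f_i(\mathbf{x}^\star)$. Summing these $K$ inequalities and using the fact that at least one weight is strictly positive (so that at least one summand is strictly greater), I would conclude
\begin{equation}
\sum_{k=1}^{K} \beta_k f_k(\mathbf{x}') > \sum_{k=1}^{K} \beta_k f_k(\mathbf{x}^\star). \nonumber
\end{equation}

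Since $\mathbf{x}' \in \mathcal{X}$, this contradicts the hypothesis that $\mathbf{x}^\star$ solves \eqref{op:mop2}, thereby proving weak efficiency. The argument has no real obstacle; the only subtlety worth flagging is that the stronger statement (full Pareto optimality rather than merely weak Pareto optimality) would require \emph{all} weights to be strictly positive, so that even a single strict component-wise improvement survives the summation. With the weights only assumed non-negative, one index with $\beta_i > 0$ is already enough to rule out a simultaneous improvement in \emph{every} component, which is exactly the condition needed for weak efficiency.
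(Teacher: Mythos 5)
Your proof is correct and is the standard argument for this result; the paper itself does not prove the proposition but simply cites it from Ehrgott, where the proof is exactly this contradiction via summing the weighted strict inequalities. Your remark that at least one $\beta_k$ must be strictly positive is a legitimate (and standard) implicit hypothesis that the paper's statement glosses over, so flagging it is appropriate rather than a gap.
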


\begin{proposition}[\cite{ehrgott}, Proposition 3.10]
Let $\mathcal{X}$ be a convex set, and let $f_k$ be concave functions, $k=1,\dots,K$. If ${\textbf{x}}^\star$ is weakly efficient, there are some $\beta_k\ge0$ such that ${\textbf{x}}^\star$ is an optimal solution of \eqref{op:mop2}.
\end{proposition}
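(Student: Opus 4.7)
My plan is to prove this classical scalarization converse via a supporting hyperplane argument in the objective space. The idea is to build a convex set in $\mathbb{R}^K$ that encodes the values attainable (or dominated) by feasible points relative to $\textbf{x}^\star$, and to exploit the weak efficiency of $\textbf{x}^\star$ to separate this set from the open positive orthant.

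Concretely, I would define
\begin{equation*}
A \;=\; \bigl\{\,\textbf{y}\in\mathbb{R}^K : \exists\,\textbf{x}\in\mathcal{X}\ \text{with}\ y_k \le f_k(\textbf{x}) - f_k(\textbf{x}^\star),\ k=1,\dots,K\,\bigr\}.
\end{equation*}
First, I would verify that $A$ is convex: given $\textbf{y},\textbf{y}'\in A$ produced by $\textbf{x},\textbf{x}'\in\mathcal{X}$ and $\lambda\in[0,1]$, the point $\lambda\textbf{x}+(1-\lambda)\textbf{x}'$ lies in $\mathcal{X}$ by convexity, and concavity of each $f_k$ gives $f_k(\lambda\textbf{x}+(1-\lambda)\textbf{x}') - f_k(\textbf{x}^\star)\ge \lambda y_k + (1-\lambda) y_k'$, so $\lambda\textbf{y}+(1-\lambda)\textbf{y}'\in A$. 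Next, I would use the weak efficiency hypothesis to show $A\cap\mathbb{R}^K_{++}=\emptyset$: if some strictly positive $\textbf{y}$ belonged to $A$, there would be an $\textbf{x}\in\mathcal{X}$ with $f_k(\textbf{x})>f_k(\textbf{x}^\star)$ for every $k$, contradicting the definition of weak efficiency.

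Both $A$ and the open convex set $\mathbb{R}^K_{++}$ being convex and disjoint, a separating hyperplane theorem yields a nonzero $\boldsymbol{\beta}\in\mathbb{R}^K$ and a scalar $c$ such that $\boldsymbol{\beta}^T\textbf{y}\le c$ for all $\textbf{y}\in A$ and $\boldsymbol{\beta}^T\textbf{y}\ge c$ for all $\textbf{y}\in\mathbb{R}^K_{++}$. Because $\mathbb{R}^K_{++}$ is a cone accumulating at the origin, we may take $c=0$, which forces $\boldsymbol{\beta}\succeq \textbf{0}$ (otherwise scaling a positive point along a coordinate with $\beta_k<0$ would drive $\boldsymbol{\beta}^T\textbf{y}\to-\infty$). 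For any $\textbf{x}\in\mathcal{X}$, the vector $\textbf{y}$ with $y_k=f_k(\textbf{x})-f_k(\textbf{x}^\star)$ lies in $A$, so $\sum_k\beta_k\bigl(f_k(\textbf{x})-f_k(\textbf{x}^\star)\bigr)\le 0$, i.e., $\sum_k\beta_k f_k(\textbf{x})\le \sum_k\beta_k f_k(\textbf{x}^\star)$, exhibiting $\textbf{x}^\star$ as an optimal solution of \eqref{op:mop2} with these $\beta_k\ge 0$.

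The delicate step, and the main obstacle, is the application of the separating hyperplane theorem: one must ensure proper separation between $A$ and $\mathbb{R}^K_{++}$ and justify taking $c=0$ together with $\boldsymbol{\beta}\ne\textbf{0}$ and $\boldsymbol{\beta}\succeq\textbf{0}$ simultaneously. I would address this by noting that $\mathbb{R}^K_{++}$ is open, so standard convex-separation results (e.g., separation of a convex set from a disjoint open convex set) give a nontrivial $\boldsymbol{\beta}$, and by testing $\boldsymbol{\beta}^T\textbf{y}\ge 0$ along the axes $\textbf{y}=\epsilon\textbf{e}_k$ with $\epsilon\downarrow 0$ one obtains $\beta_k\ge 0$ for each $k$, while nontriviality of $\boldsymbol{\beta}$ prevents the degenerate all-zero multiplier vector. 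The rest of the argument is a routine rewriting in terms of the original objectives.
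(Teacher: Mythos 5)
Your proof is correct and is essentially the standard scalarization converse that the paper cites from Ehrgott: build the convex "dominated-difference" set $A$ in objective space, show weak efficiency makes it disjoint from the open positive orthant, and separate. The only cosmetic imprecision is that the points $\epsilon\textbf{e}_k$ lie on the boundary of $\mathbb{R}^K_{++}$ rather than in it (so $\beta_k\ge 0$ follows by continuity of the separating functional, or by your alternative unbounded-scaling argument), and the lower bound $c\ge 0$ needed to pin $c=0$ exactly comes from $\textbf{0}\in A$ via $\textbf{x}=\textbf{x}^\star$ --- though $c\le 0$ already suffices for the conclusion.
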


As as result, convexity is apparently required for finding all weakly Pareto optimal points with the weighted sum method, which means that if the original problem is not convex, all the Pareto optimal points may not be found by using the weighted sum method. However, there are other weighted sum techniques in the literature (see, for example, the adaptive weighted sum method \cite{kim:06}) that are able to find all Pareto optimal points for nonconvex problems at the expense of a higher computational complexity. 

\noindent\emph{3.2) Epsilon-constraint method:} in this method, only one of the original objectives is maximized while the others are transformed into constraints:
\begin{alignat}{2}
&\mathop{\text{maximize}}_{\textbf{x}\in \mathcal{X}} \quad  f_j(\textbf{x}) \label{op:mop3} \\
&\textrm{subject to} \quad f_k(\textbf{x}) \ge \epsilon_k, \quad k=1,\dots,K, \,\, k\neq j.\nonumber
\end{alignat}
Let us introduce the following results.

\begin{proposition}[\cite{ehrgott}, Proposition 4.3]
Let ${\textbf{x}}^\star$ be an optimal solution of \eqref{op:mop3} for some $j$. Then ${\textbf{x}}^\star$ is weakly Pareto optimal.
\end{proposition}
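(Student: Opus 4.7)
The plan is to argue by contradiction, following the standard template for Pareto‑type results. Suppose $\textbf{x}^\star$ is an optimal solution of \eqref{op:mop3} for some index $j$, and suppose for contradiction that $\textbf{x}^\star$ is \emph{not} weakly Pareto optimal. By the definition of weak Pareto optimality, this means that there exists some $\textbf{x}'\in\mathcal{X}$ with the strict componentwise inequality $f_i(\textbf{x}') > f_i(\textbf{x}^\star)$ for every $i=1,\dots,K$.

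The first step I would take is to verify that this hypothetical $\textbf{x}'$ is feasible for \eqref{op:mop3}. Since $\textbf{x}^\star$ itself is feasible, we have $f_k(\textbf{x}^\star) \ge \epsilon_k$ for every $k\neq j$. Chaining this with the strict inequality $f_k(\textbf{x}') > f_k(\textbf{x}^\star)$ yields $f_k(\textbf{x}') > \epsilon_k$ for all $k\neq j$, so $\textbf{x}'\in\mathcal{X}$ satisfies all the $\epsilon$‑constraints and is therefore feasible for \eqref{op:mop3}.

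The second step is then immediate: the componentwise strict inequality also gives $f_j(\textbf{x}') > f_j(\textbf{x}^\star)$, exhibiting a feasible point of \eqref{op:mop3} whose objective value strictly exceeds that of $\textbf{x}^\star$. This contradicts the assumed optimality of $\textbf{x}^\star$ in \eqref{op:mop3}, so no such $\textbf{x}'$ can exist, and $\textbf{x}^\star$ is weakly Pareto optimal.

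I do not expect any real obstacle here. The argument uses only the definition of weak Pareto optimality (strict componentwise dominance), the fact that any point strictly dominating a feasible point is itself feasible for the $\epsilon$‑constraints, and the definition of optimality for the scalar problem \eqref{op:mop3}; no convexity, differentiability, or further structural assumption on $\mathcal{X}$ or the $f_k$ is required. The only subtlety worth flagging is why the conclusion is \emph{weak} Pareto optimality rather than full Pareto optimality: the contradiction specifically needs the strict inequality $f_j(\textbf{x}') > f_j(\textbf{x}^\star)$ on the objective coordinate, which is guaranteed by strict componentwise dominance but would fail under the weaker hypothesis $\textbf{f}(\textbf{x}')\succeq\textbf{f}(\textbf{x}^\star)$ with equality in coordinate $j$.
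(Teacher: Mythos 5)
Your proof is correct and is exactly the standard contradiction argument for this result (the paper itself gives no proof, since the proposition is quoted directly from Ehrgott's textbook, where it is proved the same way: a strictly dominating point would remain feasible for the $\epsilon$-constraints and strictly improve the objective $f_j$). Your closing remark on why only \emph{weak} Pareto optimality follows is also accurate.
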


\begin{proposition}[\cite{ehrgott}, Proposition 4.5]
A feasible solution ${\textbf{x}}^\star\in\mathcal{X}$ is Pareto optimal if, and only if, there exists a set of $\hat{\epsilon}_k, k=1,\dots,K$ such that ${\textbf{x}}^\star$ is an optimal solution of \eqref{op:mop3} for all $j=1,\dots,K$.
\end{proposition}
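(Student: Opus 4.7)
The plan is to prove both directions of the equivalence by contradiction, exploiting the close match between the structure of the $\epsilon$-constraint problem \eqref{op:mop3} and the component-wise inequality used in the definition of Pareto optimality.

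For the ``only if'' direction, I would take $\textbf{x}^\star$ Pareto optimal and choose the natural candidate $\hat{\epsilon}_k = f_k(\textbf{x}^\star)$ for $k=1,\dots,K$. With this choice, $\textbf{x}^\star$ is trivially feasible for \eqref{op:mop3} for every $j$. To see it is actually optimal, suppose some index $j$ admits an $\textbf{x}' \in \mathcal{X}$ with $f_j(\textbf{x}') > f_j(\textbf{x}^\star)$ and $f_k(\textbf{x}') \ge \hat{\epsilon}_k = f_k(\textbf{x}^\star)$ for all $k \neq j$. Then $\textbf{f}(\textbf{x}') \succeq \textbf{f}(\textbf{x}^\star)$ with strict inequality in component $j$, contradicting Pareto optimality of $\textbf{x}^\star$.

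For the ``if'' direction, assume $\textbf{x}^\star$ is optimal for \eqref{op:mop3} for all $j$ with respect to some $\hat{\epsilon}_k$, and suppose for contradiction that $\textbf{x}^\star$ is not Pareto optimal. Then there exists $\textbf{x}' \in \mathcal{X}$ with $f_i(\textbf{x}') \ge f_i(\textbf{x}^\star)$ for every $i$ and $f_{j_0}(\textbf{x}') > f_{j_0}(\textbf{x}^\star)$ for some index $j_0$. Now consider \eqref{op:mop3} with this specific $j = j_0$: because $\textbf{x}^\star$ is feasible we have $f_k(\textbf{x}^\star) \ge \hat{\epsilon}_k$ for $k \neq j_0$, and chaining with $f_k(\textbf{x}') \ge f_k(\textbf{x}^\star)$ gives $f_k(\textbf{x}') \ge \hat{\epsilon}_k$, so $\textbf{x}'$ is feasible as well. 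Combined with $f_{j_0}(\textbf{x}') > f_{j_0}(\textbf{x}^\star)$, this contradicts the assumed optimality of $\textbf{x}^\star$ for index $j_0$.

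Both directions are essentially bookkeeping; the only mild subtlety is in the ``if'' direction, where one must be careful to invoke the $\epsilon$-constraint optimality for precisely the index $j_0$ where the supposed dominating point achieves strict improvement, and then verify that this dominating point satisfies every remaining constraint $f_k(\cdot) \ge \hat{\epsilon}_k$, $k \neq j_0$, through the chain of inequalities provided by the feasibility of $\textbf{x}^\star$. No convexity or regularity assumption on $\mathcal{X}$ or on the $f_k$ is needed, which is a key structural advantage of the $\epsilon$-constraint scalarization over the weighted-sum method discussed earlier in the section.
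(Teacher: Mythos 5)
Your proof is correct and is the standard Chankong--Haimes argument for the $\epsilon$-constraint characterization; the paper itself gives no proof, citing \cite{ehrgott} directly, and your two directions (choosing $\hat\epsilon_k = f_k(\textbf{x}^\star)$ for the ``only if'' part, and invoking optimality at the index $j_0$ of strict improvement for the ``if'' part) match the argument in that reference. The only caveat is that your ``if'' direction relies on the standard definition of Pareto optimality (no $\textbf{x}'$ with $f_i(\textbf{x}')\ge f_i(\textbf{x}^\star)$ for all $i$ and strict inequality for at least one $i$), which is what Ehrgott uses, rather than the slightly loose phrasing of Definition 2 in this paper.
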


Contrarily to the weighted sum method, convexity is not needed in the previous two propositions (but convexity is still typically required to solve problems like \eqref{op:mop3}).

\noindent\emph{3.3) Hybrid method:} this method combines the previous two methods, i.e., the weighted sum method and the epsilon-constraint method. In this case, the scalarized problem to be solved has a weighted sum objective and constraints on all (or some) objectives. 
\begin{alignat}{2}
&\mathop{\text{maximize}}_{\textbf{x}\in \mathcal{X}} \quad  \sum_{k\in\mathcal{K}_1} \beta_kf_k(\textbf{x}) \label{op:mop4} \\
&\textrm{subject to} \quad f_k(\textbf{x}) \ge \epsilon_k, \quad k\in\mathcal{K}_2,\nonumber
\end{alignat}
where $|\mathcal{K}_1|\le K$ and $|\mathcal{K}_2|\le K$, being $|\mathcal{A}|$ the cardinality of set $\mathcal{A}$, and $\beta_k$ are real non-negative weights. 

\subsection{Majorization-Minimization Method}
\label{sec_mm}
The MM is an approach to solve optimization problems that are too difficult to solve in their original formulation. The principle behind the MM method is to transform a difficult problem into a sequence of simple problems. Interested readers may refer to \cite{hunter:04} and references therein for more details.

The method works as follows. Suppose that we want to maximize $f_0(\textbf{x})$ over $\mathcal{X}$. In the MM approach, instead of maximizing the cost function $f_0(\textbf{x})$ directly, the algorithm optimizes a sequence of approximate objective functions that minorize $f_0(\textbf{x})$, producing a sequence $\{\textbf{x}^{(k)}\}$ according to the following update rule:
\begin{equation}
\textbf{x}^{(k+1)} = \argmax_{\textbf{x}\in\mathcal{X}} \,\, \hat{f}_0(\textbf{x}, \textbf{x}^{(k)}),\label{surro_problem}
\end{equation}
where $\textbf{x}^{(k)}$ is the point generated by the algorithm at iteration $k$ and $\hat{f}_0(\textbf{x}, \textbf{x}^{(k)})$ known as surrogate function is the minorization function of $f_0(\textbf{x})$ at $\textbf{x}^{(k)}$, i.e., it has to be a global lower bound tight at $\textbf{x}^{(k)}$. Problem \eqref{surro_problem} will be referred as surrogate problem. In addition, the surrogate function must also be continuous in $\textbf{x}$ and $\textbf{x}^{(k)}$. The last condition that the surrogate function must fulfill is that the directional derivatives\footnote{Let $f : \mathbb{C}^N\rightarrow \mathbb{R}$. Then, the directional derivative of $f(\textbf{x})$ in the direction of vector $\textbf{d}$ is given by $f'(\textbf{x};\textbf{d}) \triangleq \lim_{\lambda\rightarrow 0} \frac{f(\textbf{x} + \lambda\textbf{d}) - f(\textbf{x})}{\lambda}$.} of itself and of the original objective function $f_0(\textbf{x})$ must be equal at the point $\textbf{x}^{(k)}$. All in all, the four conditions are as follows:
\begin{eqnarray}
\hspace{-5mm}(\text{A}1):&& \quad \hat{f}_0(\textbf{x}^{(k)}, \textbf{x}^{(k)}) = f_0(\textbf{x}^{(k)}), \quad \forall \textbf{x}^{(k)}\in\mathcal{X},\label{cond1}\\
\hspace{-5mm}(\text{A}2):&& \quad \hat{f}_0(\textbf{x}, \textbf{x}^{(k)}) \le f_0(\textbf{x}), \quad \forall \textbf{x}, \textbf{x}^{(k)}\in\mathcal{X},\\
\hspace{-5mm}(\text{A}3):&& \quad \hat{f}'_0(\textbf{x},\textbf{x}^{(k)};\textbf{d})|_{\textbf{x}=\textbf{x}^{(k)}} = f'_0(\textbf{x}^{(k)};\textbf{d}),  \nonumber\\
&&\quad \forall \textbf{d} \text{ with } \textbf{x}^{(k)} + \textbf{d}\in\mathcal{X},\\
\hspace{-5mm}(\text{A}4):&& \quad \hat{f}_0(\textbf{x}, \textbf{x}^{(k)}) \text{ is continuous in } \textbf{x} \text{ and } \textbf{x}^{(k)}.\label{cond4}
\end{eqnarray}
Under assumptions $(\text{A}1)-(\text{A}4)$, every limit point of the sequence $\{\textbf{x}^{(k)}\}$ is a locally optimal point of the original problem (globally optimal if the problem is convex) (see \cite{hunter:04} for details).

\section{Problem Formulation}
\label{sec_sysm}

Let us consider a wireless broadcast multiuser system consisting of one base station (BS) transmitter equipped with $n_T$ antennas and a set of $K$ receivers, denoted as $\mathcal{U}_T = \{1, 2, \dots , K\}$, where the $k$-th receiver is equipped with $n_{R_k}$ antennas. We assume that a given user is not able to decode information and to harvest energy simultaneously, and that a user being served with information by the BS uses all the energy to decode the signal. Thus, the set of users is partitioned into two disjoint subsets. One that contains the information users, denoted as $\mathcal{U}_I \subseteq \mathcal{U}_T$ with $|\mathcal{U}_I| = N$, and the other subset that contains harvesting users, denoted as  $\mathcal{U}_E \subseteq \mathcal{U}_T$ with $|\mathcal{U}_E| = M$. Therefore, $\mathcal{U}_I \,\cap\, \mathcal{U}_E = \emptyset$ and  $|\mathcal{U}_I| + |\mathcal{U}_E| = N+M = K$.\footnote{In this paper, we assume for simplicity in the formulation that a user belongs to either the harvesting set or the information set and that both sets are known and fixed. This assumption could be generalized by considering that some users are not selected in either set as well as by defining which particular users are scheduled in each particular set (i.e., user grouping strategies). However, this falls out of the scope of this paper.} Without loss of generality (w.l.o.g.), let us index users as $\mathcal{U}_I = \{1,\dots,N\}$ and $\mathcal{U}_E=\{N+1,\dots,N+M\}$.

The equivalent baseband channel from the BS to the $k$-th receiver is denoted by $\textbf{H}_k \in \mathbb{C}^{n_{R_k}\times n_{T}}$. It is also assumed that the set of matrices $\{\textbf{H}_k\}$ is known to the BS and to the corresponding receivers (the case of imperfect CSI is out of the scope of the paper).

As far as the signal model is concerned, the received signal for the $i$-th information receiver can be modeled as
\begin{equation}
\textbf{y}_i = \textbf{H}_i\textbf{B}_i\textbf{x}_i + \textbf{H}_i\sum_{\substack{k\in\mathcal{U}_I\\k\neq i}}\textbf{B}_k\textbf{x}_k + \textbf{n}_i, \quad \forall i \in \mathcal{U}_I.
\label{sig_model}
\end{equation}
In the previous notation, $\textbf{B}_i\textbf{x}_i$ represents the transmitted signal for user $i \in \mathcal{U}_I$, where $\textbf{B}_i\in \mathbb{C}^{n_{T}\times n_{S_i}}$ is the precoder matrix and $\textbf{x}_i\in \mathbb{C}^{n_{S_i}\times 1}$ represents the information symbol vector. It is also assumed that the signals transmitted to different users are independent and zero mean. $n_{S_i}$ denotes the number of streams assigned to user $i \in \mathcal{U}_I$ and we assume that $n_{S_i} = \min\{n_{R_i}, n_T\} \, \forall i \in \mathcal{U}_I$. The transmit covariance matrix is $\textbf{S}_i = \textbf{B}_i\textbf{B}_i^H$ if we assume w.l.o.g. that $\mathbb{E}\left[\textbf{x}_i\textbf{x}_i^H\right] = \textbf{I}_{n_{S_i}}$. $\textbf{n}_i\in \mathbb{C}^{n_{R_i}\times 1}$ denotes the receiver noise vector, which is considered Gaussian with $\mathbb{E}\left[\textbf{n}_i\textbf{n}_i^H\right] = \textbf{I}_{n_{R_i}}$\footnote{We assume that noise power $\sigma^2 = 1$ w.l.o.g., otherwise we could simply apply a scale factor at the receiver and re-scale the channels accordingly.}. Note that the middle term of \eqref{sig_model} is an interference term. The covariance matrix of the interference plus noise is written as
\begin{equation}\label{interf}
\boldsymbol\Omega_i(\textbf{S}_{-i}) = \textbf{H}_i\textbf{S}_{-i}\textbf{H}^H_i + \textbf{I}, \quad \forall i \in \mathcal{U}_I,
\end{equation}
where $\textbf{S}_{-i} = \sum_{\substack{k\in\mathcal{U}_I\\k\neq i}}\textbf{S}_k$. Let $\tilde{\textbf{x}} = \textbf{B}\textbf{x}$ denote the signal vector transmitted by the BS, where the joint precoding matrix is defined as $\textbf{B} = [\textbf{B}_1  \quad\dots \quad \textbf{B}_N] \in \mathbb{C}^{n_T \times n_S}$, being $n_S = \sum_{i\in\mathcal{U}_I}n_{S_i}$ the total number of streams of all information users, and the data vector as $\textbf{x} = \left[\textbf{x}_1^T  \quad\dots \quad \textbf{x}_N^T\right]^T \in \mathbb{C}^{n_S \times 1}$, that must satisfy the power constraint formulated as $\mathbb{E}[\|{\tilde{\textbf{x}}}\|^2] = \sum_{i\in\mathcal{U}_I}\Tr(\textbf{S}_i)\le P_{T}$, where $P_{T}$ represents the total available transmission power at the BS.

The total RF-band power harvested by the $j$-th user from all receiving antennas, denoted by $\bar{Q}_j$, is proportional to that of the equivalent baseband signal, i.e., $\forall j \in \mathcal{U}_E,$ we have:
\begin{eqnarray}
\hspace{-5mm}\bar{Q}_j &=& \zeta_j  \mathbb{E}\Big[\Big\| \textbf{H}_j \sum_{i\in \mathcal{U}_I}\textbf{B}_i \textbf{x}_i\Big\|^2\Big] =  \zeta_j \sum_{i\in \mathcal{U}_I} \mathbb{E}[\| \textbf{H}_j \textbf{B}_i \textbf{x}_i\|^2], 
\label{eqEn}
\end{eqnarray}
where $\zeta_j$ is a constant that accounts for the loss for converting the harvested RF power to electrical power. Notice that, for simplicity, in \eqref{eqEn} we have omitted the harvested power due to the noise term since it can be assumed negligible.

The transmitter design that we propose in this paper is modeled as a nonconvex multi-objective optimization problem. The goal is to maximize, simultaneously, the individual data rates and the harvested powers of the information and harvesting users, respectively. Given this and the previous system model, the optimization problem is written as
\begin{alignat}{2}
\mathop{\text{maximize}}_{\{\textbf{S}_i\}}& \quad  \Big((R_n(\textbf{S}))_{n\in\mathcal{U}_I}, (E_m(\textbf{S}))_{m\in\mathcal{U}_E}\Big)\label{op:mob} \\
\textrm{subject to} 
&  \quad C1:\sum_{i\in\mathcal{U}_I}\Tr(\textbf{S}_i)\le P_T\nonumber\\
&  \quad C2:  \textbf{S}_i \succeq 0, \quad \forall i \in \mathcal{U}_I, \nonumber
\end{alignat}
where $\textbf{S} \triangleq (\textbf{S}_i)_{\forall i\in\mathcal{U}_I}$, the data rate expression is given by 
\begin{eqnarray}
\hspace{-5mm}R_n(\textbf{S}) &=& \log \det \left(\textbf{I} + \textbf{H}_n\textbf{S}_n\textbf{H}_n^H\boldsymbol\Omega_n^{-1}(\textbf{S}_{-n})\right)\\
&=& \log \det \left(\boldsymbol\Omega_n(\textbf{S}_{-n}) + \textbf{H}_n\textbf{S}_n\textbf{H}_n^H\right) \nonumber\\
&&- \log\det\left(\boldsymbol\Omega_n(\textbf{S}_{-n}) \right)\\
&=&  \underbrace{\log \det \left(\textbf{I} + \textbf{H}_n\bar{\textbf{S}}\textbf{H}_n^H\right)}_{\triangleq \,\,s_n(\textbf{S})} - \underbrace{\log\det\left(\boldsymbol\Omega_n(\textbf{S}_{-n}) \right)}_{\triangleq \,\,g_n(\boldsymbol\Omega_n(\textbf{S}_{-n}))},
\end{eqnarray}
with $\bar{\textbf{S}} = \sum_{k\in\mathcal{U}_I} \textbf{S}_k$, and the harvested power is given by
\begin{equation}
E_m(\textbf{S}) = \sum_{i\in\mathcal{U}_I}\Tr(\textbf{H}_m\textbf{S}_i\textbf{H}_m^H).
\end{equation}
The previous problem in \eqref{op:mob} is not convex due the objective functions (in fact, due to $\boldsymbol\Omega_i(\textbf{S}_{-i}) $) and is difficult to solve. In order to find Pareto optimal points, we can reformulate it by using any of the techniques presented in Section \ref{sec-mop}. In the following, we propose two approaches based on the weighted sum method and on the hybrid method. For convenience, we start with the hybrid method as it is the one that has received the most attention in the literature \cite{zhang:13b}, \cite{zhang:13d}. However in that literature, the interference in \eqref{interf} is assumed to be removed by the transmission strategy. This assumption makes the problem convex and hence easier to solve.

\vspace{0mm}
\subsection{Hybrid-Based Formulation to Solve \eqref{op:mob}}
\label{subsec_hyb}
In the hybrid approach, some of the objective functions are collapsed into a single objective by means of scalarization and some of the objective functions are added as constraints. In particular, the data rates are left in the objective whereas the harvesting constraints are included as individual harvesting constraints. With this particular formulation, we are able to guarantee a minimum value for the power to be harvested by the harvesting users. Thus, problem \eqref{op:mob} is formulated as

\begin{alignat}{2}
\mathop{\text{max}}_{\{\textbf{S}_i\}}& \quad \sum_{i\in\mathcal{U}_I} \omega_i\log \det \left(\textbf{I} + \textbf{H}_i\bar{\textbf{S}}\textbf{H}_i^H\right) - \omega_i\log\det\left(\boldsymbol\Omega_i(\textbf{S}_{-i}) \right)  \nonumber\\
\textrm{s. t.} 
&  \quad C1:\sum_{i\in\mathcal{U}_I}\Tr(\textbf{H}_j\textbf{S}_i\textbf{H}_j^H)  \ge Q_j, \quad \forall j \in \mathcal{U}_E  \label{op:wet1}\\
&  \quad C2:\sum_{i\in\mathcal{U}_I}\Tr(\textbf{S}_i)\le P_T\nonumber\\
&  \quad C3:  \textbf{S}_i \succeq 0, \quad \forall i \in \mathcal{U}_I, \nonumber
\end{alignat}
where $Q_j = \frac{{\bar{Q}^{\min}}_j}{\zeta_j}$, being $\{\bar{Q}^{\min}_j\}$ the set of minimum power harvesting constraints, and $\omega_i$ are some real non-negative weights. For simplicity in the notation, let us define the feasible set $\mathcal{S}_1$ as 
\begin{eqnarray}
\hspace{-5mm}\mathcal{S}_1\triangleq &\Bigg\{&\textbf{S} : \sum_{i\in\mathcal{U}_I}\Tr(\textbf{H}_j\textbf{S}_i\textbf{H}_j^H)  \ge Q_j, \,\forall j \in\mathcal{U}_E,\nonumber\\
&&\sum_{i\in\mathcal{U}_I}\Tr(\textbf{S}_i)\le P_T,\,\textbf{S}_i \succeq 0, \forall i \in\mathcal{U}_I\Bigg\}.
\end{eqnarray}
For a set of fixed harvesting constraints, the convex hull of the rate region can be obtained by varying the values of $\omega_i$. In addition, we can use the values of the weights to assign priorities to some users if user scheduling is to be implemented, following, for example, the proportional fair criterion \cite{liu:10}, \cite{andrews:01}. Notice that constraint $C1$ is associated with the minimum power to be harvested for a given user. Note also the similarities of problem \eqref{op:wet1} with the single user case presented in \cite{zhang:13b} and its extension to the multiuser case presented in \cite{rub:13d}. As commented before, the novelty is that we do not force the transmitter to cancel the interference generated among the information users (as opposed to BD approaches \cite{spencer:04}) and, thus, we allow the system to have more degrees of freedom to improve the system throughput and the harvested power simultaneously. Later in Section \ref{subsec_q1}, we will present a method based on MM to solve the nonconvex problem in \eqref{op:wet1}.  

\subsection{Weighted Sum-Based Formulation to Solve \eqref{op:mob}}
\label{subsec_we}
In situations where the exact amount of power to be harvested by harvesting users is not needed, we can also obtain Pareto optimal points by means of the simpler weighted-sum method. In this case, we can assign priorities so that some users tend to harvest more power than others, although the exact amounts cannot be controlled. As we will see later, the overall problem based on this new formulation is much easier to solve. The transmitter design is obtained through the following nonconvex optimization problem:
\begin{alignat}{2}
\mathop{\text{max}}_{\{\textbf{S}_i\}}& \quad \sum_{i\in\mathcal{U}_I}  \omega_i\log \det \left(\textbf{I} + \textbf{H}_i\bar{\textbf{S}}\textbf{H}_i^H\right) - \omega_i\log\det\left(\boldsymbol\Omega_i(\textbf{S}_{-i}) \right) \nonumber\\
&\quad + \sum_{j\in\mathcal{U}_E}\sum_{i\in\mathcal{U}_I}\alpha_j\Tr(\textbf{H}_j\textbf{S}_i\textbf{H}_j^H)\label{op:wetre} \\
\textrm{s. t.} 
&  \quad C1:\sum_{i\in\mathcal{U}_I}\Tr(\textbf{S}_i)\le P_T\nonumber\\
&  \quad C2:  \textbf{S}_i \succeq 0, \quad \forall i \in \mathcal{U}_I, \nonumber
\end{alignat}
where $\alpha_j$ are some real non-negative weights. For simplicity in the notation, let us define the feasible set $\mathcal{S}_2$ as 
\begin{equation}
\mathcal{S}_2\triangleq \Bigg\{\textbf{S} : \sum_{i\in\mathcal{U}_I}\Tr(\textbf{S}_i)\le P_T,\,\textbf{S}_i \succeq 0, \forall i \in\mathcal{U}_I\Bigg\}.
\end{equation}
As we will show later in Section \ref{subsec_q2}, the algorithm to solve \eqref{op:wetre} is easier than the algorithm to solve \eqref{op:wet1}. Hence, there is a trade-off in terms of speed of convergence of the algorithms and in terms of the harvested power control since, as we introduced before, in \eqref{op:wet1} the transmitter can fully control the amount of power to be harvested by the users whereas in \eqref{op:wetre} the transmitter can only control the proportion of the power to be harvested among the users. 

\section{MM-based Techniques to Solve Problem \eqref{op:mob}}
In this section, we present a method based on the MM philosophy to solve problems \eqref{op:wet1} and \eqref{op:wetre}. Since the original problems \eqref{op:wet1} and \eqref{op:wetre} are nonconvex, we reformulate them and make them convex before applying the MM method. This reformulation will follow two steps. In the first step, problems \eqref{op:wet1} and \eqref{op:wetre} will be convexified by using a linear approximation of the nonconvex terms. This is the approach taken in papers such as \cite{Hong:16}, \cite{scutari:14}, and \cite{You:14}. Instead of solving the reformulated (convex) problem, in the second step, we design a quadratic approximation of the remaining convex terms in order to find a surrogate problem easier to solve. Finally, we apply the MM method to the quadratic reformulation.

As benchmarks for comparison, we will consider the case of just convexifying the nonconvex terms, which is an approach taken in the previous literature, and also consider a gradient method applied directly to the nonconvex problems \eqref{op:wet1} and \eqref{op:wetre}.

Although the mathematical developments of the proposed MM approaches are more tedious than the approaches usually taken in the literature, the resulting algorithms are faster. 

\label{sec_mm_app}
%\section{MM-based Approach to Solve \eqref{op:wet1}}
\subsection{Approach to Solve the Hybrid Formulation in \eqref{op:wet1}}
\label{subsec_q1}
As we introduced before, we need to reformulate the original nonconvex problem \eqref{op:wet1} and make it convex. This will be done in two steps. Motivated by the work in \cite{scutari:14}, in this first step, we derive a linear approximation for the nonconcave (right-hand side) part of the objective function of \eqref{op:wet1}, i.e., $f_0(\textbf{S}) = \sum_{i\in\mathcal{U}_I} \omega_is_i(\textbf{S}) - \omega_ig_i(\boldsymbol\Omega_i(\textbf{S}_{-i}))$, in such a way that the modified problem is convex\footnote{In fact, by applying the approximation, the overall objective function becomes concave.}. In order to find a concave lower bound of $f_0(\textbf{S})$, $g_i(\cdot)$ can be upper bounded linearly at point $\boldsymbol\Omega^{(0)}_i = \sum_{\substack{k\in\mathcal{U}_I\\k\neq i}}\textbf{H}_i\textbf{S}^{(0)}_k\textbf{H}^H_i + \textbf{I}$ as
\begin{eqnarray}
g_i(\boldsymbol\Omega_i\hspace{-4mm}&(\hspace{-4mm}&\textbf{S}_{-i})) \le \nonumber\\
&& g_i\left(\boldsymbol\Omega^{(0)}_i\right) + \Tr\left(\left(\boldsymbol\Omega_i^{(0)}\right)^{-1}\left(\boldsymbol\Omega_i(\textbf{S}_{-i}) - \boldsymbol\Omega^{(0)}_i\right)\right) \nonumber\\
&=& \text{constant} + \Tr\left(\left(\boldsymbol\Omega_i^{(0)}\right)^{-1}\boldsymbol\Omega_i(\textbf{S}_{-i})\right) \nonumber\\
&\triangleq& \hat{g}_i(\boldsymbol\Omega_i(\textbf{S}_{-i}),\boldsymbol\Omega_i^{(0)}).\label{taylor}
\end{eqnarray}
Even though problem \eqref{op:wet1} reformulated with the previous upper bound $\hat{g}_i(\boldsymbol\Omega_i(\textbf{S}_{-i}),\boldsymbol\Omega_i^{(0)})$ is convex, we want to go one step further and apply a quadratic lower bound for the left hand side of $f_0(\textbf{S})$, i.e., $s_i(\textbf{S})$ in a way that the overall lower bound fulfills conditions $(\text{A}1)-(\text{A}4)$  presented before in Section \ref{sec_mm} and the MM method can be invoked. Note that the upper bound $\hat{g}_i(\boldsymbol\Omega_i(\textbf{S}_{-i}),\boldsymbol\Omega_i^{(0)})$ already fulfills the four conditions $(\text{A}1) - (\text{A}4)$. The idea of implementing this quadratic bound is to find a surrogate problem that is much simpler and easier to solve than the one obtained by just considering the linear bound $\hat{g}_i(\boldsymbol\Omega_i(\textbf{S}_{-i}),\boldsymbol\Omega_i^{(0)})$. \footnote{The surrogate problem obtained by just applying the bound $\hat{g}_i(\boldsymbol\Omega_i(\textbf{S}_{-i}),\boldsymbol\Omega_i^{(0)})$ will be used as benchmark. The specific mathematical details of the optimization problem and the algorithm will be described in App. \ref{app_bench}.}

We now focus attention on deriving the surrogate function for the left hand side of $f_0(\textbf{S})$, i.e., $s_i(\textbf{S})$. In order for the surrogate problem to be easily solved, we force the surrogate function of $s_i(\textbf{S})$ around $\bar{\textbf{S}}^{(0)}$ to be quadratic, where $\bar{\textbf{S}}^{(0)} = \sum_{k\in\mathcal{U}_I} \textbf{S}^{(0)}_k$ and ${\textbf{S}}_k^{(0)}$ is the solution of the algorithm at the previous iteration. By doing this, as will be apparent later, the overall surrogate problem can be formulated as an SDP optimization problem.

\begin{proposition}\label{prop_sf}
A valid surrogate function, $\hat{s}_i(\bar{\textbf{S}}, \bar{\textbf{S}}^{(0)})$, for the function $s_i(\bar{\textbf{S}}) = \log \det \left(\textbf{I} + \textbf{H}_n\bar{\textbf{S}}\textbf{H}_n^H\right)$ that satisfies conditions $(\text{A}1)-(\text{A}4)$ is
\begin{equation}
\hat{s}_i(\bar{\textbf{S}}, \bar{\textbf{S}}^{(0)}) \triangleq \Tr\left(\textbf{J}_i\bar{\textbf{S}}\right) + \Tr\left(\bar{\textbf{S}}^H\textbf{M}_i \bar{\textbf{S}}\right)+ \kappa_1,\quad \forall  \bar{\textbf{S}},\, \bar{\textbf{S}}^{(0)} \in\mathcal{S}^{n_T}_+,\label{sur_apb}
\end{equation}
with matrices $\textbf{J}_i = \textbf{G}_i-\bar{\textbf{S}}^{(0),H}\textbf{M}_{i}-\textbf{M}_{i}\bar{\textbf{S}}^{(0)}$, $\textbf{G}_i = \textbf{H}_i^H\left(\textbf{I} + \textbf{H}_i\bar{\textbf{S}}^{(0)}\textbf{H}_i^H\right)^{-1}\textbf{H}_i$ and $\textbf{M}_i = -\gamma_i \textbf{I}$, being $\gamma_i \ge \frac{1}{2} \lambda_{\max}^2(\textbf{H}^H_i\textbf{H}_i)$, $\kappa_1$ contains some terms that do not depend on $\textbf{S}$, and $\mathcal{S}^{n_T}_+$ denotes the set of positive semidefinite matrices.
\end{proposition}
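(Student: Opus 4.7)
My plan is to recognize $\hat{s}_i$ as exactly the quadratic lower bound produced by the descent-lemma inequality for a smooth concave function, applied to $s_i(\bar{\textbf{S}}) = \log\det(\textbf{I} + \textbf{H}_i\bar{\textbf{S}}\textbf{H}_i^H)$. Since $s_i$ is concave and differentiable on the PSD cone with Euclidean gradient $\nabla s_i(\bar{\textbf{S}}) = \textbf{H}_i^H(\textbf{I} + \textbf{H}_i\bar{\textbf{S}}\textbf{H}_i^H)^{-1}\textbf{H}_i$, which coincides at $\bar{\textbf{S}}^{(0)}$ with the matrix $\textbf{G}_i$ in the statement, it will be enough to exhibit a uniform Lipschitz constant $L_i$ for $\nabla s_i$ on $\mathcal{S}^{n_T}_+$. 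The standard estimate
\[
s_i(\bar{\textbf{S}}) \geq s_i(\bar{\textbf{S}}^{(0)}) + \Tr\bigl(\textbf{G}_i(\bar{\textbf{S}} - \bar{\textbf{S}}^{(0)})\bigr) - \tfrac{L_i}{2}\bigl\|\bar{\textbf{S}} - \bar{\textbf{S}}^{(0)}\bigr\|_F^2
\]
then delivers assumption $(\text{A}2)$ for every $\gamma_i \geq L_i/2$.

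The core technical step is the Hessian bound $L_i \leq \lambda_{\max}^2(\textbf{H}_i^H\textbf{H}_i)$. Parameterizing along a Hermitian direction $\textbf{D}$ and writing $\textbf{A}(t) = \textbf{I} + \textbf{H}_i(\bar{\textbf{S}} + t\textbf{D})\textbf{H}_i^H$, the second directional derivative is
\[
s_i''(\bar{\textbf{S}};\textbf{D}) = -\Tr\bigl(\textbf{A}(0)^{-1}\textbf{H}_i\textbf{D}\textbf{H}_i^H\textbf{A}(0)^{-1}\textbf{H}_i\textbf{D}\textbf{H}_i^H\bigr).
\]
I would rewrite this as $\|\textbf{A}(0)^{-1/2}\textbf{H}_i\textbf{D}\textbf{H}_i^H\textbf{A}(0)^{-1/2}\|_F^2$, use $\bar{\textbf{S}}\succeq 0 \Rightarrow \textbf{A}(0)^{-1} \preceq \textbf{I}$ together with the submultiplicative bound $\|\textbf{X}\textbf{Y}\|_F \leq \|\textbf{X}\|_{\text{op}}\|\textbf{Y}\|_F$ to peel off the two $\textbf{A}(0)^{-1/2}$ factors, and apply the same submultiplicative bound again to the resulting $\|\textbf{H}_i\textbf{D}\textbf{H}_i^H\|_F^2 = \Tr(\textbf{D}\textbf{P}\textbf{D}\textbf{P})$ with $\textbf{P} = \textbf{H}_i^H\textbf{H}_i$, giving $|s_i''(\bar{\textbf{S}};\textbf{D})| \leq \lambda_{\max}^2(\textbf{H}_i^H\textbf{H}_i)\|\textbf{D}\|_F^2$. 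Taylor's theorem with integral remainder along the segment from $\bar{\textbf{S}}^{(0)}$ to $\bar{\textbf{S}}$, which lies in the PSD cone by convexity, then yields the quadratic minorization. This spectral estimate is the only nontrivial obstacle; the PSD-cone restriction is essential because it is what guarantees $\textbf{A}(0)^{-1} \preceq \textbf{I}$ uniformly.

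With the quadratic bound in hand, I would expand $-\gamma_i\|\bar{\textbf{S}} - \bar{\textbf{S}}^{(0)}\|_F^2$, exploit Hermiticity of $\bar{\textbf{S}}$ and $\bar{\textbf{S}}^{(0)}$ to recognize the cross term as $\Tr((\bar{\textbf{S}}^{(0),H}\textbf{M}_i + \textbf{M}_i\bar{\textbf{S}}^{(0)})\bar{\textbf{S}})$ with $\textbf{M}_i = -\gamma_i\textbf{I}$, and regroup into the stated form $\Tr(\textbf{J}_i\bar{\textbf{S}}) + \Tr(\bar{\textbf{S}}^H\textbf{M}_i\bar{\textbf{S}}) + \kappa_1$, where $\textbf{J}_i = \textbf{G}_i - \bar{\textbf{S}}^{(0),H}\textbf{M}_i - \textbf{M}_i\bar{\textbf{S}}^{(0)}$ and $\kappa_1 = s_i(\bar{\textbf{S}}^{(0)}) - \Tr(\textbf{G}_i\bar{\textbf{S}}^{(0)}) - \gamma_i\|\bar{\textbf{S}}^{(0)}\|_F^2$ collects the $\bar{\textbf{S}}$-independent terms. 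The four MM conditions then check out mechanically: $(\text{A}1)$ is forced by the definition of $\kappa_1$; $(\text{A}2)$ is the quadratic minorization itself; $(\text{A}3)$ follows because differentiating $\hat{s}_i(\cdot,\bar{\textbf{S}}^{(0)})$ gives $\textbf{J}_i + 2\textbf{M}_i\bar{\textbf{S}}^{(0)} = \textbf{G}_i = \nabla s_i(\bar{\textbf{S}}^{(0)})$, so all directional derivatives agree at $\bar{\textbf{S}}^{(0)}$; and $(\text{A}4)$ is immediate since $\hat{s}_i$ is polynomial in the entries of both arguments.
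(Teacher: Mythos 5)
Your proof is correct and reaches the same surrogate, the same identification of $\textbf{G}_i$ via matching directional derivatives, and the same curvature constant $\gamma_i \ge \tfrac{1}{2}\lambda_{\max}^2(\textbf{H}_i^H\textbf{H}_i)$ as the paper; the overall strategy (a quadratic minorizer obtained from a uniform bound on the second directional derivative along segments in the PSD cone, which is exactly where the paper's condition $(\text{A}2)$-via-linear-cuts argument lives) is the same. Where you genuinely diverge is in how the spectral estimate is established. The paper vectorizes the second-derivative inequality, reduces it to the Loewner condition $\textbf{M}_i + \tfrac{1}{2}\textbf{P}_i^T\textbf{P}_i \preceq 0$ with $\textbf{P}_i = \textbf{H}_i^H\textbf{A}_i^{-1}\textbf{H}_i$, and then bounds $\lambda_{\max}(\textbf{P}_i)$ by invoking an eigenvalue-of-products theorem (cited from Wang--Zhang) together with an SVD of $\textbf{H}_i$ and the fact that $\lambda_{\min}(\textbf{A}_i)\ge 1$. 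You instead write $-s_i''(\bar{\textbf{S}};\textbf{D}) = \|\textbf{A}^{-1/2}\textbf{H}_i\textbf{D}\textbf{H}_i^H\textbf{A}^{-1/2}\|_F^2$ and peel off factors with $\|\textbf{X}\textbf{Y}\|_F \le \|\textbf{X}\|_{\mathrm{op}}\|\textbf{Y}\|_F$ and $\textbf{A}^{-1}\preceq\textbf{I}$, which is more elementary (no Kronecker products, no external eigenvalue theorem) and makes transparent why the PSD restriction on $\bar{\textbf{S}}$ is what keeps the bound uniform; the paper's vectorized form, on the other hand, is the one it reuses directly in Appendix D for the multi-block surrogate of Proposition 3, so its extra machinery is not wasted there. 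Your closing bookkeeping — expanding $-\gamma_i\|\bar{\textbf{S}}-\bar{\textbf{S}}^{(0)}\|_F^2$ to recover $\textbf{J}_i$ and $\kappa_1$, and checking $(\text{A}1)$, $(\text{A}3)$, $(\text{A}4)$ — matches the paper's and is sound.
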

\begin{proof}
See Appendix \ref{app2}.
\end{proof}

Let us now reformulate the optimization problem in \eqref{op:wet1} with the surrogate function $\hat{s}_i(\bar{\textbf{S}}, \bar{\textbf{S}}^{(0)}) - \hat{g}_i(\boldsymbol\Omega_i(\textbf{S}_{-i}),\boldsymbol\Omega_i^{(0)})$:
\begin{eqnarray}
\Tr\left(\textbf{E}_i\bar{\textbf{S}}\right) + \Tr\left(\bar{\textbf{S}}^H\textbf{M}_i \bar{\textbf{S}}\right) +  \Tr\left(\textbf{R}_i\textbf{S}_i\right) + \kappa_2,\label{surg_a}
\end{eqnarray}
where $\textbf{R}_i = \textbf{H}_i^H\left(\boldsymbol\Omega_i^{(0)}\right)^{-1}\textbf{H}_i \in\mathbb{C}^{n_T\times n_T}$, $\textbf{E}_i = \textbf{J}_i - \textbf{R}_i$, and $\kappa_2$ contains some terms that do not depend on $\textbf{S}$. Thus, problem \eqref{op:wet1} can be reformulated as
\begin{alignat}{2}
\mathop{\text{max}}_{\{\textbf{S}_i\}}& \quad \sum_{i\in\mathcal{U}_I} \omega_i \Bigg(\Tr\left(\textbf{E}_i\bar{\textbf{S}}\right) + \Tr\left(\bar{\textbf{S}}^H\textbf{M}_i \bar{\textbf{S}}\right) +  \Tr\left(\textbf{R}_i\textbf{S}_i\right)\Bigg) \nonumber\\
&\quad- \rho \left\|\textbf{S}_i - \textbf{S}^{(0)}_i\right\|_F^2 \label{op:wet1bb} \\
\textrm{s. t.} 
&  \quad \textbf{S} \in\mathcal{S}_1 \nonumber,
\end{alignat}
where we have added a proximal quadratic term to the surrogate function in which $\rho$ is any non-negative constant that can be tuned by the algorithm. This term provides more flexibility in the algorithm design stage and may help to speed up the convergence. By performing some mathematical manipulations, we are able to obtain the following result:

\begin{proposition}\label{prop_p1}
The optimization problem presented in \eqref{op:wet1} can be solved based on MM method by solving recursively the following SDP problem:
\begin{alignat}{2}
\mathop{\text{min}}_{\{\textbf{S}_i\}, \,{\textbf{s}},\,t}& \quad  t \label{op:wetN3_b} \\
\textrm{s. t.} 
& \quad C1: \left[ \begin{array}{cc}
     {t}\textbf{I} & \tilde{\textbf{C}}^{\frac{1}{2}}{\textbf{s}} - \textbf{c} \\ \left(\tilde{\textbf{C}}^{\frac{1}{2}}{\textbf{s}} - \textbf{c}\right)^H & 1 \end{array} \right] \succeq 0 \nonumber \\
& \quad C2: \textbf{T}_i{\textbf{s}} = \emph{vec}\left(\textbf{S}_i\right),&&\quad \forall i \in\mathcal{U}_I\nonumber\\
&  \quad C3: \textbf{S} \in\mathcal{S}_1 \nonumber,
\end{alignat}
where ${\textbf{s}} = \left[\emph{vec}(\textbf{S}_1)^T \emph{vec}(\textbf{S}_2)^T \dots \emph{vec}(\textbf{S}_N)^T\right]^T \in\mathbb{C}^{n_Tn_T|\mathcal{U}_I|\times 1}$, $t$ is a dummy variable, and $\tilde{\textbf{C}}^{\frac{1}{2}}$, $\textbf{T}_i$, and $\textbf{c}$ are some constant matrices and vectors computed as shown in Appendix \ref{app3}. Vector $\textbf{c}$ depends on matrix $\bar{\textbf{S}}^{(0)}$.
\end{proposition}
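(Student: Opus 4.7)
The plan is to show that the concave maximization in \eqref{op:wet1bb} can be rewritten as the SDP \eqref{op:wetN3_b} through a standard vectorize/complete-the-square/Schur-complement pipeline, and then to invoke the MM convergence theory (using the surrogate validity already established in Proposition~\ref{prop_sf} plus the first-order Taylor bound $\hat g_i$) to conclude that solving \eqref{op:wetN3_b} recursively yields a local optimum of \eqref{op:wet1}. The motivation is that, since $\textbf{M}_i=-\gamma_i\textbf{I}$ and the proximal term is also negative-definite, the objective of \eqref{op:wet1bb} is a (strictly) concave quadratic in the stacked covariance variables, so it admits a clean conic reformulation.

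First I would introduce the stacked variable ${\textbf{s}}=[\text{vec}(\textbf{S}_1)^T\cdots\text{vec}(\textbf{S}_N)^T]^T$ together with the $i$-th block selector matrix $\textbf{T}_i$ satisfying $\textbf{T}_i\textbf{s}=\text{vec}(\textbf{S}_i)$, so that $\text{vec}(\bar{\textbf{S}})=\bigl(\sum_{k\in\mathcal{U}_I}\textbf{T}_k\bigr)\textbf{s}$. Using $\Tr(AB)=\text{vec}(A^H)^H\text{vec}(B)$ and $\Tr(\bar{\textbf{S}}^H\textbf{M}_i\bar{\textbf{S}})=-\gamma_i\|\text{vec}(\bar{\textbf{S}})\|^2$, each of the three trace terms in the summand of \eqref{op:wet1bb} becomes either linear or quadratic in $\textbf{s}$; similarly, $-\rho\|\textbf{S}_i-\textbf{S}_i^{(0)}\|_F^2 = -\rho\|\textbf{T}_i\textbf{s}-\text{vec}(\textbf{S}_i^{(0)})\|^2$. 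Collecting contributions, the objective can be written as $-{\textbf{s}}^H\tilde{\textbf{C}}{\textbf{s}}+2\,\text{Re}(\tilde{\textbf{b}}^H{\textbf{s}})+\kappa_3$, where $\tilde{\textbf{C}}$ is an explicit Hermitian matrix built from $\omega_i\gamma_i(\sum_k\textbf{T}_k)^H(\sum_k\textbf{T}_k)$ and $\rho\,\textbf{T}_i^H\textbf{T}_i$, and $\tilde{\textbf{b}}$ gathers the linear coefficients coming from $\textbf{E}_i$, $\textbf{R}_i$ and the proximal center (so $\tilde{\textbf{b}}$, and hence $\textbf{c}:=\tilde{\textbf{C}}^{-1/2}\tilde{\textbf{b}}$, depends on $\bar{\textbf{S}}^{(0)}$).

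Next I would complete the square to obtain $-\|\tilde{\textbf{C}}^{1/2}{\textbf{s}}-\textbf{c}\|^2$ up to additive constants, so that maximizing \eqref{op:wet1bb} is equivalent to minimizing $\|\tilde{\textbf{C}}^{1/2}{\textbf{s}}-\textbf{c}\|^2$. Introducing an epigraph variable $t$ gives the constraint $\|\tilde{\textbf{C}}^{1/2}{\textbf{s}}-\textbf{c}\|^2\le t$, which by the Schur-complement lemma is equivalent to the LMI
\begin{equation*}
\left[\begin{array}{cc} t\textbf{I} & \tilde{\textbf{C}}^{1/2}{\textbf{s}}-\textbf{c} \\ (\tilde{\textbf{C}}^{1/2}{\textbf{s}}-\textbf{c})^H & 1 \end{array}\right]\succeq 0,
\end{equation*}
which is exactly $C1$ in \eqref{op:wetN3_b}. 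The equality constraints $C2$ are bookkeeping linking ${\textbf{s}}$ with the block matrices $\{\textbf{S}_i\}$, and $C3$ keeps the original feasible set $\mathcal{S}_1$, which is itself SDP-representable (linear trace (in)equalities plus PSD cones). MM convergence then follows from Proposition~\ref{prop_sf}, from the fact that the Taylor linearization $\hat g_i$ is a tight first-order upper bound of the concave function $g_i$ (so it also satisfies (A1)--(A4)), and from the observation that the proximal perturbation vanishes together with its gradient at ${\textbf{S}}={\textbf{S}}^{(0)}$, so none of (A1)--(A4) is disturbed.

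The main obstacle I anticipate is the bookkeeping in assembling $\tilde{\textbf{C}}$ and $\textbf{c}$ without indexing mistakes and, more importantly, ensuring that $\tilde{\textbf{C}}\succ 0$ so that $\tilde{\textbf{C}}^{1/2}$ is well-defined and invertible: the contribution $\sum_i\omega_i\gamma_i(\sum_k\textbf{T}_k)^H(\sum_k\textbf{T}_k)$ is only rank-$n_T^2$ (it penalizes only the sum $\bar{\textbf{S}}$), so positive definiteness must come from the per-user proximal regularization $\rho\sum_i\textbf{T}_i^H\textbf{T}_i=\rho\,\textbf{I}$, requiring $\rho>0$. Once this is secured, the remaining derivation is the explicit construction of $\tilde{\textbf{C}}^{1/2}$, $\textbf{T}_i$ and $\textbf{c}$, which I would relegate to Appendix~\ref{app3} as the statement already indicates.
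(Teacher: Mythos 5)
Your proposal follows essentially the same route as the paper's Appendix~\ref{app3}: vectorize the quadratic surrogate with the stacked variable $\textbf{s}$ and the selector matrices $\textbf{T}_i$, complete the square to obtain $\|\tilde{\textbf{C}}^{\frac{1}{2}}\textbf{s}-\textbf{c}\|_2^2$, and pass to the SDP via the epigraph reformulation and the Schur complement, with MM convergence inherited from the surrogate conditions $(\text{A}1)$--$(\text{A}4)$. Your observation that invertibility of $\tilde{\textbf{C}}$ (needed for $\textbf{c}=\tilde{\textbf{C}}^{-\frac{1}{2}}\textbf{b}^*$) rests on $\rho>0$, because the quadratic contribution $\textbf{T}^H\tilde{\textbf{M}}\textbf{T}$ penalizes only $\text{vec}(\bar{\textbf{S}})$ and is rank deficient, is a genuine subtlety that the paper, which only asserts $\tilde{\textbf{C}}\succeq 0$ and allows $\rho\ge 0$, glosses over.
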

\begin{proof}
See Appendix \ref{app3}.
\end{proof}

The final algorithm is presented in Alg. \ref{algQ1}.

\subsection{Approach to Solve the Sum Method Formulation in \eqref{op:wetre}}
\label{subsec_q2}
Let us start the development by reformulating problem \eqref{op:wetre}:
\begin{alignat}{2}
\mathop{\text{max}}_{\{\textbf{S}_i\}}& \quad\sum_{i\in\mathcal{U}_I} \omega_i\left(s_i(\textbf{S}) - \omega_ig_i(\boldsymbol\Omega_i(\textbf{S}_{-i}))\right) + \sum_{i\in\mathcal{U}_I}\Tr(\textbf{R}_H\textbf{S}_i)\nonumber \\
\textrm{s. t.} 
&  \quad \textbf{S}\in\mathcal{S}_2 \label{op:wetre2},
\end{alignat}
where $\textbf{R}_H = \sum_{j\in\mathcal{U}_E} \alpha_j\textbf{H}_j^H\textbf{H}_j$. The right hand side of the objective function of \eqref{op:wetre2} is convex (in fact it is linear) whereas the left hand side is not convex. Let us apply the same steps that we applied before but with a slight modification. Previously in \eqref{taylor}, we found that $g_i(\boldsymbol\Omega_i(\textbf{S}_{-i}))$ could be approximated by $\hat{g}_i(\boldsymbol\Omega_i(\textbf{S}_{-i}),\boldsymbol\Omega_i^{(0)}) = \Tr\left(\left(\boldsymbol\Omega_i^{(0)}\right)^{-1}\boldsymbol\Omega_i(\textbf{S}_{-i})\right)$ (omitting the constant term). Now, as the objective function is different than the one from problem \eqref{op:wet1}, the goal is to find a surrogate function for the function $s_i(\textbf{S})$ that allows us to find efficiently a solution for the surrogate problem. 
\begin{algorithm}[t]
\centering
\begin{algorithmic}[1]
\caption{Algorithm for Solving Problem \eqref{op:wet1}}
\label{algQ1}
\vspace{2 mm}
\STATE Initialize $\textbf{S}^{(0)} \in \mathcal{S}_1$. Set $k=0$\\
\STATE Repeat 
\STATE \quad Compute $\textbf{c}$ with $\textbf{S}^{(k)}$, given in \eqref{eq_c}
\STATE \quad Generate the $(k+1)$-th tuple $(\textbf{S}^\star_i)_{\forall i \in\mathcal{U}_I}$ by solving \\
 \quad the SDP in \eqref{op:wetN3_b}\\
\STATE \quad Set $\textbf{S}^{(k+1)}_i = \textbf{S}^\star_i, \,\forall i \in\mathcal{U}_I $, and set $k = k +1$\\
\STATE  Until convergence is reached\\
\end{algorithmic}
\end{algorithm}

\begin{proposition}\label{prop_sf2}
A valid surrogate function, $\hat{s}_i(\textbf{S},\textbf{S}^{(0)})$, for the function $s_i(\textbf{S})$ that satisfies conditions $(\text{A}1)-(\text{A}4)$ is 
\begin{eqnarray}
\hat{s}_i(\textbf{S},\textbf{S}^{(0)}) &\triangleq&\sum_{\ell\in\mathcal{U}_I}\Tr\left(\textbf{J}_{i}\textbf{S}_\ell\right) + \sum_{\ell\in\mathcal{U}_I}\Tr\left(\textbf{S}_\ell^H\textbf{M}_{ i} \textbf{S}_\ell\right) + \kappa_3,\nonumber\\
&&\quad \forall  \textbf{S}_\ell,\, \textbf{S}_\ell^{(0)} \in\mathcal{S}^{n_T}_+,\label{sur_ap2}
\end{eqnarray}
with matrices $\textbf{J}_i = \textbf{G}_i-\textbf{S}^{(0),H}_\ell\textbf{M}_{ i}-\textbf{M}_{ i}  \textbf{S}_\ell^{(0)}$, $\textbf{G}_{i} = \textbf{H}^H_i\left(\textbf{I} + \textbf{H}_i\sum_{k\in\mathcal{U}_I}\textbf{S}_k^{(0)}\textbf{H}_i^H\right)^{-1}\textbf{H}_i$, and $\textbf{M}_i = -\xi_i \textbf{I}$, being $\xi_i \ge \frac{1}{2} |\mathcal{U}_I|^2\lambda_{\max}^2(\textbf{H}^H_i\textbf{H}_i)$, and $\kappa_3$ contains the constant terms that do not depend on $\textbf{S}$.
\end{proposition}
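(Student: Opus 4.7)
The plan is to verify the four MM conditions $(\text{A}1)$--$(\text{A}4)$ from Section \ref{sec_mm} directly, following the template of Proposition \ref{prop_sf} but with one structural modification: the surrogate must now be separable across the individual matrices $\{\textbf{S}_\ell\}_{\ell\in\mathcal{U}_I}$ rather than written in terms of their sum $\bar{\textbf{S}}$. The key observation enabling this is that $s_i(\textbf{S}) = \log\det(\textbf{I} + \textbf{H}_i \bar{\textbf{S}} \textbf{H}_i^H)$ depends on $\textbf{S}$ only through $\bar{\textbf{S}} = \sum_k \textbf{S}_k$, so $\nabla_{\textbf{S}_\ell} s_i = \textbf{G}_i$ is the same matrix for every $\ell$. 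This common-gradient structure is exactly what allows a single linear coefficient $\textbf{J}_i$ -- built from $\textbf{G}_i$ with the correction $-\textbf{S}_\ell^{(0),H}\textbf{M}_i - \textbf{M}_i \textbf{S}_\ell^{(0)}$ chosen to cancel the gradient of the quadratic piece at the expansion point -- to achieve tangency simultaneously in every $\textbf{S}_\ell$-direction.

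The main step is the global lower bound $(\text{A}2)$. I would first invoke the Lipschitz-gradient descent inequality for the concave scalar function $\bar{\textbf{S}} \mapsto \log\det(\textbf{I} + \textbf{H}_i \bar{\textbf{S}} \textbf{H}_i^H)$, which is precisely the bound established in the proof of Proposition \ref{prop_sf}, to obtain
\begin{equation*}
s_i(\textbf{S}) \ge s_i(\textbf{S}^{(0)}) + \Tr\!\left(\textbf{G}_i(\bar{\textbf{S}} - \bar{\textbf{S}}^{(0)})\right) - \tfrac{1}{2}\lambda_{\max}^2(\textbf{H}_i^H \textbf{H}_i)\,\|\bar{\textbf{S}} - \bar{\textbf{S}}^{(0)}\|_F^2.
\end{equation*}
The linear part separates automatically as $\sum_\ell \Tr(\textbf{G}_i(\textbf{S}_\ell - \textbf{S}_\ell^{(0)}))$. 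For the quadratic part I would apply the Cauchy--Schwarz inequality
\begin{equation*}
\Big\|\sum_{\ell\in\mathcal{U}_I} (\textbf{S}_\ell - \textbf{S}_\ell^{(0)})\Big\|_F^2 \le |\mathcal{U}_I| \sum_{\ell\in\mathcal{U}_I} \|\textbf{S}_\ell - \textbf{S}_\ell^{(0)}\|_F^2,
\end{equation*}
which replaces the norm of the sum by a separable sum of norms at the cost of an extra factor in the curvature constant. Completing the square per user via the identity $-\xi_i \|\textbf{S}_\ell - \textbf{S}_\ell^{(0)}\|_F^2 = \Tr(\textbf{S}_\ell^H \textbf{M}_i \textbf{S}_\ell) + \Tr\!\left(\left(-\textbf{S}_\ell^{(0),H}\textbf{M}_i - \textbf{M}_i \textbf{S}_\ell^{(0)}\right)\textbf{S}_\ell\right) - \xi_i \|\textbf{S}_\ell^{(0)}\|_F^2$ with $\textbf{M}_i = -\xi_i \textbf{I}$, and absorbing the remaining $\textbf{S}$-independent terms into $\kappa_3$, gives exactly the advertised form of $\hat{s}_i$ provided $\xi_i$ is at least as large as the stated conservative bound.

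Conditions $(\text{A}1)$, $(\text{A}3)$, and $(\text{A}4)$ are essentially by construction. $(\text{A}4)$ is immediate because $\hat{s}_i$ is polynomial in both $\textbf{S}$ and $\textbf{S}^{(0)}$. $(\text{A}1)$ is enforced by the choice of $\kappa_3$, fixed by evaluating $\hat{s}_i$ at $\textbf{S} = \textbf{S}^{(0)}$ and equating the result to $s_i(\textbf{S}^{(0)})$. For $(\text{A}3)$, differentiating $\hat{s}_i$ with respect to each $\textbf{S}_\ell$ at $\textbf{S}_\ell^{(0)}$ causes the gradient of the quadratic term to exactly cancel the correction buried inside $\textbf{J}_i$, leaving the coefficient $\textbf{G}_i$, which matches $\nabla_{\textbf{S}_\ell} s_i(\textbf{S}^{(0)})$. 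The main technical obstacle I foresee is the bookkeeping that chains the descent inequality, the Cauchy--Schwarz inflation, and the algebraic form of $\hat{s}_i$ into a single valid chain of inequalities under the prescribed $\xi_i$; once that is done, the remaining MM conditions follow by direct computation.
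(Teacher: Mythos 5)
Your proof is correct, and it reaches the stated surrogate by a genuinely different route than the paper. The paper proves condition $(\text{A}2)$ from scratch in Appendix \ref{app4}: it parameterizes a linear cut $\textbf{S}_\ell^{(0)}+\mu(\textbf{S}_\ell^{(1)}-\textbf{S}_\ell^{(0)})$, computes the second derivative of the $\log\det$ along the cut, vectorizes everything with the stacking matrix $\textbf{T}$, and reduces $(\text{A}2)$ to the matrix inequality $\tilde{\textbf{M}}_i+\tfrac{1}{2}\textbf{T}^H(\textbf{I}\otimes\textbf{P}_i^T\textbf{P}_i)\textbf{T}\preceq 0$, which it then settles via $\|\textbf{T}\textbf{g}\|_2^2\le|\mathcal{U}_I|^2\|\textbf{g}\|_2^2$ and the eigenvalue bound $\lambda_{\max}(\textbf{P}_i^T\textbf{P}_i)\le\lambda_{\max}^2(\textbf{H}_i^H\textbf{H}_i)$. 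You instead reuse Proposition \ref{prop_sf} as a black box — the quadratic minorant of $\log\det(\textbf{I}+\textbf{H}_i\bar{\textbf{S}}\textbf{H}_i^H)$ in the aggregate variable $\bar{\textbf{S}}$ — and separate its quadratic term across users with $\|\sum_\ell\boldsymbol\Delta_\ell\|_F^2\le|\mathcal{U}_I|\sum_\ell\|\boldsymbol\Delta_\ell\|_F^2$, then complete the square. This bypasses the entire second-derivative and Kronecker-product machinery, makes the logical dependence on Proposition \ref{prop_sf} explicit, and in fact yields the sharper curvature requirement $\xi_i\ge\tfrac{1}{2}|\mathcal{U}_I|\lambda_{\max}^2(\textbf{H}_i^H\textbf{H}_i)$ (the paper's bound $\|\textbf{T}\textbf{g}\|_2^2\le|\mathcal{U}_I|^2\|\textbf{g}\|_2^2$ is loose by a factor of $|\mathcal{U}_I|$, since $\textbf{T}\textbf{T}^H=|\mathcal{U}_I|\textbf{I}$). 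Because enlarging $\xi_i$ only makes the surrogate a looser lower bound without disturbing the tangency conditions $(\text{A}1)$ and $(\text{A}3)$, your argument a fortiori establishes the proposition exactly as stated; what the paper's heavier derivation buys is a self-contained treatment that does not presuppose the single-variable result, while yours buys brevity and a better constant.
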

\begin{proof}
See Appendix \ref{app4}.
\end{proof}

\begin{remark}
Note that the two surrogate functions \eqref{sur_apb} and \eqref{sur_ap2} have the same form but with a difference in the quadratic term. Notice that surrogate function \eqref{sur_ap2} is tighter than \eqref{sur_apb} and with cross-products. As will be shown later, this will allow us to decouple the optimization problem for each information user $i$ and, thus, solve all problems in parallel. On the other hand, thanks to the fact that surrogate function \eqref{sur_apb} is looser than \eqref{sur_ap2}, a faster convergence can be obtained than if surrogate \eqref{sur_ap2} were to be applied in problem \eqref{op:wet1}.
\end{remark}
Let us now reformulate problem \eqref{op:wetre2} with the lower bound that we just found (omitting the constant terms):
\begin{alignat}{2}
\mathop{\text{max}}_{\{\textbf{S}_i\}}& \quad \sum_{i\in\mathcal{U}_I}\Tr\left(\check{\textbf{J}}_i\textbf{S}_i\right) + \sum_{i\in\mathcal{U}_I}\Tr\left(\textbf{S}_i^H\check{\textbf{M}} \textbf{S}_i\right) \nonumber\\
&\quad- \sum_{i\in\mathcal{U}_I}\Tr\left(\textbf{R}_i\sum_{\substack{k\in\mathcal{U}_I\\k\neq i}}\textbf{S}_k\right) + \sum_{i\in\mathcal{U}_I}\Tr(\textbf{R}_H\textbf{S}_i)\label{op:wetre3} \\
\textrm{s. t.} 
&  \quad  \textbf{S}\in\mathcal{S}_2, \nonumber
\end{alignat}
where $\check{\textbf{J}}_i = \check{\textbf{G}} - \textbf{S}^{(0),H}_i\check{\textbf{M}}- \check{\textbf{M}}  \textbf{S}_i^{(0)}$, with $\check{\textbf{M}} = \sum_{k\in\mathcal{U}_I}\omega_k \textbf{M}_k$ and $\check{\textbf{G}} = \sum_{k\in\mathcal{U}_I} \omega_k\textbf{G}_k$. Note that we have arranged the indices to make the notation easier to follow and consistent with the original notation. We can further simplify the objective function by grouping terms considering that matrix $\check{\textbf{M}}$ is diagonal, i.e., $\check{\textbf{M}} = -\beta\textbf{I}$, being $\beta=\frac{1}{2} |\mathcal{U}_I|^2\sum_{k\in\mathcal{U}_I}\omega_k\lambda_{\max}^2(\textbf{H}^H_k\textbf{H}_k)$:
\begin{alignat}{2}
\mathop{\text{min}}_{\{\textbf{S}_i\}}& \quad \beta\sum_{i\in\mathcal{U}_I}\Tr\left(\textbf{S}_i^H\textbf{S}_i\right) - \sum_{i\in\mathcal{U}_I}\Tr\left(\textbf{F}_i\textbf{S}_i\right) \label{op:wetre4} \\
\textrm{s. t.} 
&  \quad  \textbf{S}\in\mathcal{S}_2, \nonumber
\end{alignat}
where 
\begin{equation}
\textbf{F}_i = \check{\textbf{J}}_i -  \sum_{\substack{k\in\mathcal{U}_I\\k\neq i}}\textbf{R}_k + \textbf{R}_H.\label{matrix_f} 
\end{equation}
Note that we have changed the sign of the objective and reformulated the problem as a minimization one. The idea is to find a closed-form expression for the optimum covariance matrices $\{\textbf{S}_i\}$. If we dualize constraint $C1$ and form a partial Lagrangian, we obtain the following optimization problem:
\begin{alignat}{2}
\mathop{\text{min}}_{\{{\textbf{S}}_i\}}& \quad \beta\sum_{i\in\mathcal{U}_I}\Tr\left({\textbf{S}}_i^H{\textbf{S}}_i\right) - \sum_{i\in\mathcal{U}_I}\Tr\left(\textbf{W}_i(\mu){\textbf{S}}_i\right) \label{op:wetre6} \\
\textrm{s. t.} 
&  \quad  {\textbf{S}}_i \succeq 0, \quad \forall i \in \mathcal{U}_I, \nonumber
\end{alignat}
where $\textbf{W}_i(\mu) = \textbf{F}_i - \mu\textbf{I}$, for $\mu\ge 0$ the Lagrange multiplier associated with constraint $C1$ of problem \eqref{op:wetre2}. The previous problem is clearly separable for each user $i$. Thus, for each information user, problem \eqref{op:wetre6} is equivalent to solving the following projection problem:
\begin{alignat}{2}
\mathop{\text{min}}_{{\textbf{S}}_i}& \quad \left\|\sqrt{\beta}{\textbf{S}}_i - \check{\textbf{W}}_i(\mu)\right\|_F \label{op:wetre7} \\
\textrm{s. t.} 
&  \quad {\textbf{S}}_i \succeq 0,  \nonumber
\end{alignat}
where $\check{\textbf{W}}_i(\mu) = \frac{1}{2\sqrt{\beta}}\textbf{W}_i(\mu) = \frac{1}{2\sqrt{\beta}}(\textbf{F}_i - \mu\textbf{I})$. The previous result is very nice as the solution of \eqref{op:wetre7} is simple and elegant, thanks to the fact that problem \eqref{op:wetre7} is a projection onto the semidefinite cone and has a closed-form solution \cite{henrion:11}.
%\begin{theorem}[\cite{henrion:11}]
%Let $\textbf{X}\in\mathcal{S}_+^n$ and $\textbf{C}\in\mathcal{S}_+^n$. Let the eigenvalue decomposition (EVD) of $\textbf{C}$ be $\textbf{C} = \textbf{U}\emph{diag}(\lambda_1,\dots,\lambda_n)\textbf{U}^H$. Then the projection of $\textbf{C}$ onto the semidefinite cone, i.e. $\textbf{X}\triangleq \text{Proj}_{\mathcal{S}_+^n}(\textbf{C})$, is defined by the following optimization problem:
%\begin{alignat}{2}
%\mathop{\text{minimize}}_{\textbf{X}\in\mathcal{S}_+^n}& \quad \left\|\textbf{X} - \textbf{C}\,\right\|_F, \label{projection}
%\end{alignat}
%where the optimal projection $\textbf{X}^\star$ has an explicit expression given by:
%\begin{equation}
%\textbf{X}^\star = \textbf{U}\emph{diag}\left(\max(0,\lambda_1),\dots,\max(0,\lambda_n)\right)\textbf{U}^H \triangleq  [\textbf{C}\,]^+.
%\end{equation}
%\end{theorem}
Let the eigenvalue decomposition (EVD) of matrix $\textbf{F}_i$ be $\textbf{F}_i = \textbf{U}_{F_i} \boldsymbol\Lambda_{F_i}\textbf{U}_{F_i}^H$. The expression of $\textbf{S}_i^\star(\mu)$ is, thus, given by
\begin{equation}
\textbf{S}_i^\star(\mu) = \frac{1}{\sqrt{\beta}} [\check{\textbf{W}}_i(\mu)]^+ = \frac{1}{2\beta}\textbf{U}_{F_i}^H[\boldsymbol\Lambda_{F_i} - \mu\textbf{I}]^+\textbf{U}_{F_i},\quad \forall i \in\mathcal{U}_I,
\end{equation}
where $\lambda_k ([\textbf{X}]^+) = \min(0,\lambda_k(\textbf{X}))$, with $\lambda_k(\textbf{X})$ the $k$-th eigenvalue of matrix $\textbf{X}$. Now it remains to compute the optimal Lagrange multiplier $\mu$. This can be found by means of the simple bisection method fulfilling $\sum_{i\in\mathcal{U}_I}\Tr\left([\boldsymbol\Lambda_{F_i} - \mu\textbf{I}]^+\right) = 2\beta P_T$. It turns out that, at each inner iteration, we need to compute a single EVD per information user, that is, the EVD of $\textbf{F}_i$, and a few iterations to find the optimal multiplier $\mu$. Note that the surrogate problem can be solved straightforwardly with the previous steps. The final algorithm is presented in Alg. \ref{algQ2}.
\begin{algorithm}[t]
\centering
\begin{algorithmic}[1]
\caption{Algorithm for Solving Problem \eqref{op:wetre}}
\label{algQ2}
\vspace{2 mm}
\STATE Initialize $\textbf{S}^{(0)} \in \mathcal{S}_2$. Set $k=0$\\
\STATE Repeat
\STATE \quad Compute $\textbf{F}_i$ with matrix $\textbf{S}_i^{(k)}$, $\forall i \in \mathcal{U}_I$, given in \eqref{matrix_f}
\STATE \quad Compute EVD of $\textbf{F}_i = \textbf{U}_{F_i} \boldsymbol\Lambda_{F_i}\textbf{U}_{F_i}^H$, \quad $\forall i \in \mathcal{U}_I$
\STATE \quad Compute $\mu^\star$ such that \\
\quad $\sum_{i\in\mathcal{U}_I}\Tr\left([\boldsymbol\Lambda_{F_i} - \mu^\star\textbf{I}]^+\right) = 2\beta P_T$
\STATE \quad Compute $\textbf{S}^\star_i(\mu^\star)=\frac{1}{2\beta}[\textbf{F}_i - \mu^\star\textbf{I}]^+,\quad \forall i\in\mathcal{U}_I$\\
\STATE \quad Set $\textbf{S}^{(k+1)}_i = \textbf{S}^\star_i(\mu^\star), \forall i \in \mathcal{U}_I$, and set $k = k +1$\\
\STATE  Until convergence is reached\\
\end{algorithmic}
\end{algorithm}

\subsection{Approaches Used as Benchmarks for Performance Comparison}
As the problem introduced in \eqref{op:mob} has not been addressed before in the literature, there are not specific benchmarks to compare our approaches with. For this reason, in this section, we propose some benchmark algorithms that will be used in the simulation section to compare the performance of the proposed MM approaches. These benchmarks are:
\begin{itemize}
\item Gradient-based algorithms based on \cite[Sec. 7]{grad_notes} applied directly to the nonconvex problems \eqref{op:wet1} and \eqref{op:wetre}. The gradients are not presented due to space limitations.
\item MM approaches considering just the linear approximation presented in \eqref{taylor}, i.e., $\hat{g}_i(\boldsymbol\Omega_i(\textbf{S}_{-i}),\boldsymbol\Omega_i^{(0)})$, applied to problems \eqref{op:wet1} and \eqref{op:wetre}. The specific optimization problems and algorithms can be found in App. \ref{app_bench}.
\end{itemize}

\section{Numerical Evaluation}
\label{sec_num_mm}
In this section, we evaluate the performance of the previous algorithms. In the first part of this section, we present some convergence and computational time results. For the simulations, we consider a system composed of 1 transmitter with 6 antennas, and 3 information users and 3 harvesting users with 2 antennas each. In the second part of the section, we show the performance of the proposed methods compared to the classical BD approach. In this case, for ease of presenting the information, we assume a system composed of 1 transmitter with 4 antennas, and 2 information users and 2 harvesting users with 2 antennas each. The simulation parameters common to both scenarios are the following. The maximum radiated power is $P_{T} = 1$ W. The channel matrices are generated randomly with i.i.d. entries distributed according to $\mathcal{CN} (0,1)$. The weights $\omega_i$ are set to 1.
\subsection{Convergence Evaluation}
In this subsection, we evaluate the convergence behavior and the computational time of the methods presented in Sections \ref{subsec_q1} and \ref{subsec_q2} and the benchmark approach presented in App. \ref{app_bench}. The benchmark method for problem \eqref{op:wetre} presented in App. \ref{app_bench} will not be evaluated as it is clearly worse\footnote{However, it was included in the paper for the sake of completeness} than the one presented in Section \ref{subsec_q2}. In the figures, the legend is interpreted as follows: `MM-L for \eqref{op:wet1}' refers to the method developed in App. \ref{app_bench} for problem \eqref{op:wet1}, `MM-Q for \eqref{op:wet1}' refers to the method in Section \ref{subsec_q1}, and `MM-Q for \eqref{op:wetre}' refers to the method in Section \ref{subsec_q2}. In order to compare all methods, we set the values of $\alpha_j$ and the values of $Q_j$ so that the same system sum rate is achieved. These values are: $\boldsymbol\alpha = [1, 5, 10]$, and $\textbf{Q} = [3.8, 7.2, 6.4]$ power units. Software package CVX is used to solve problem \eqref{op:wet3} \cite{cvx}, and SeDuMi solver is used to solve problem \eqref{op:wetN3_b} \cite{sedumi}.

Figure \ref{fig1} presents the sum rate convergence as a function of iterations. The three approaches converge to the same sum rate value but require a different number of iterations. In fact, the required number of iterations depends on how well the surrogate function approximates the original function. Note that the surrogate function used in the `MM-L for \eqref{op:wet1}' approach is the one that best approximates the objective function and, thus, fewer iterations are needed. 

\begin{figure}
 \centering
\includegraphics[width=0.5\textwidth]{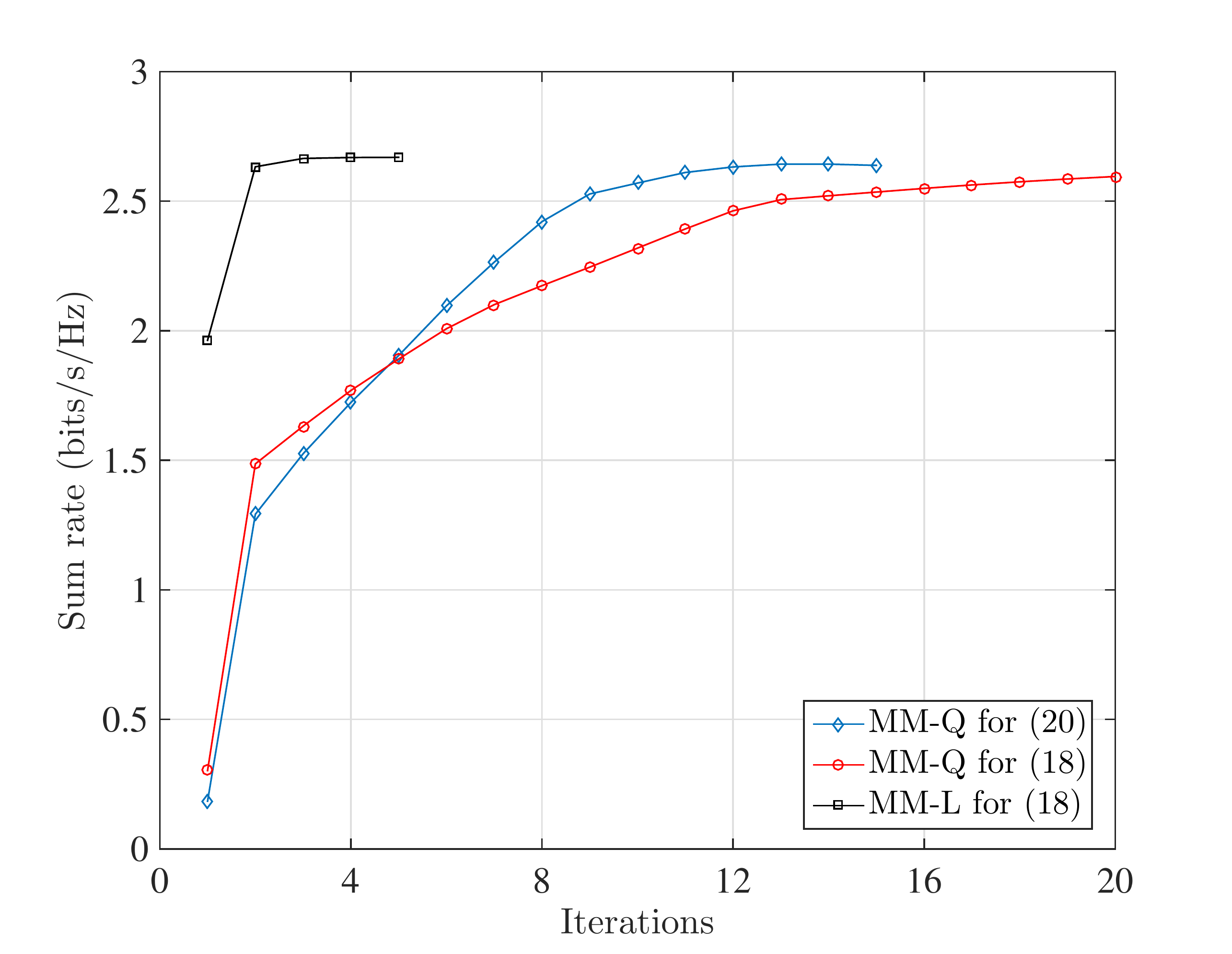}  
\caption{Convergence of the system sum rate vs number of iterations for three different approaches.}
\label{fig1}
\vspace{-4mm}
\end{figure}

\begin{figure}
 \centering
\includegraphics[width=0.5\textwidth]{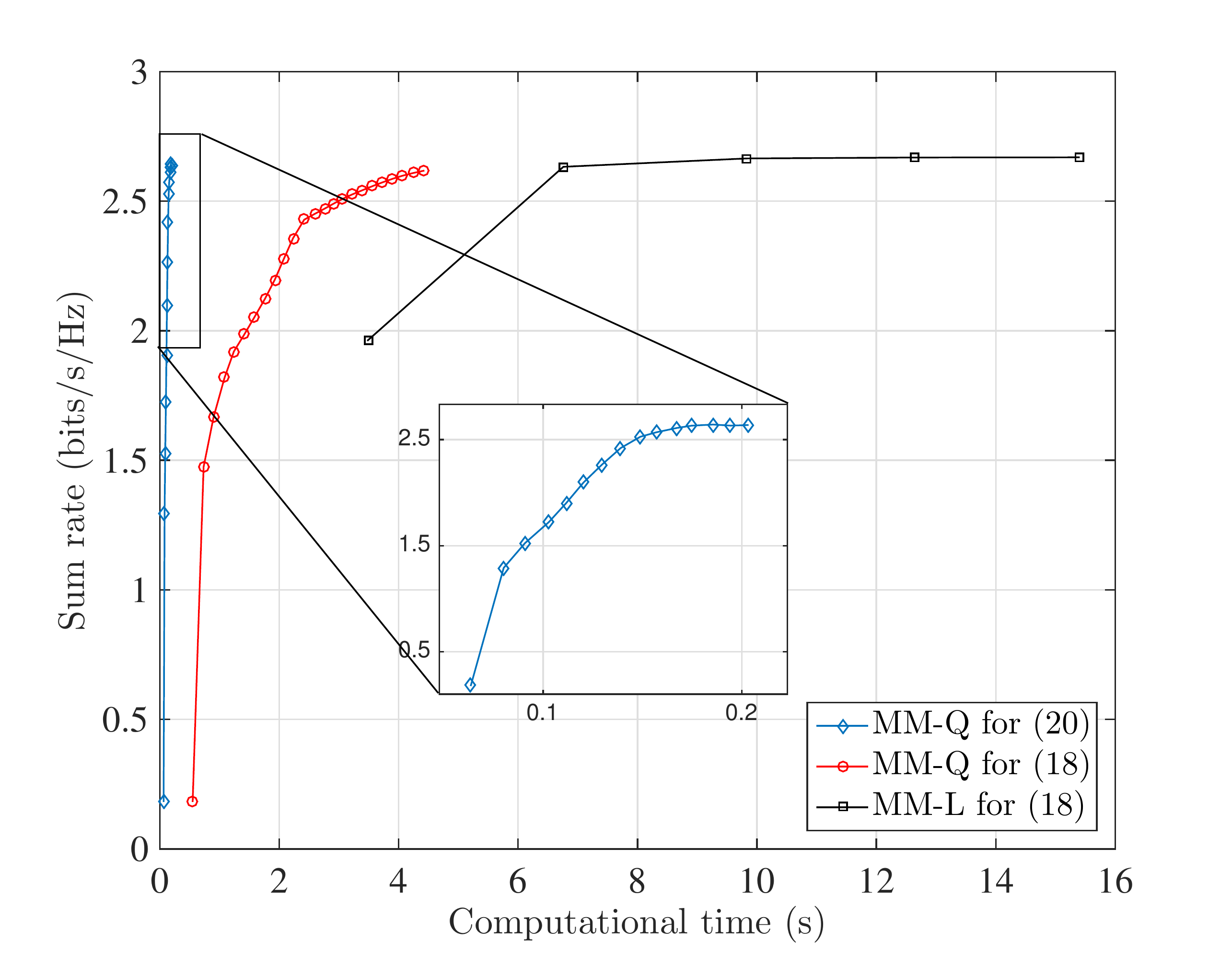}  
\caption{Convergence of the system sum rate vs computational time for three different approaches.}
\label{fig1b}
\vspace{-4mm}
\end{figure}

Figure \ref{fig1b} shows the computational time required by the three previous methods. We see that the `MM-Q for \eqref{op:wetre}' method converges much faster than the other two approaches, as expected. The `MM-Q for \eqref{op:wet1}' approach requires more iterations than the `MM-L for \eqref{op:wet1}' approach but each iteration is solved faster since a specific algorithm can be employed to solve the convex optimization problem. Hence, the `MM-Q for \eqref{op:wet1}' algorithm is the best option.

For the sake of comparison and completeness, we also show in Figures \ref{fig1c} and \ref{fig1d} the convergence and the computational time of a gradient-like benchmark approach. The plot legend reads as follows: `GRAD for \eqref{op:wet1}' and `GRAD for \eqref{op:wetre}' refers to a gradient approach applied to problems \eqref{op:wet1} and \eqref{op:wetre}, respectively. `all ones' and `identity' mean that covariance matrices are initialized using an all ones matrix and the identity matrix, respectively. Results show that the proposed MM approaches are one to two orders of magnitude faster than the gradient-based methods.  
\begin{figure}
 \centering
\includegraphics[width=0.5\textwidth]{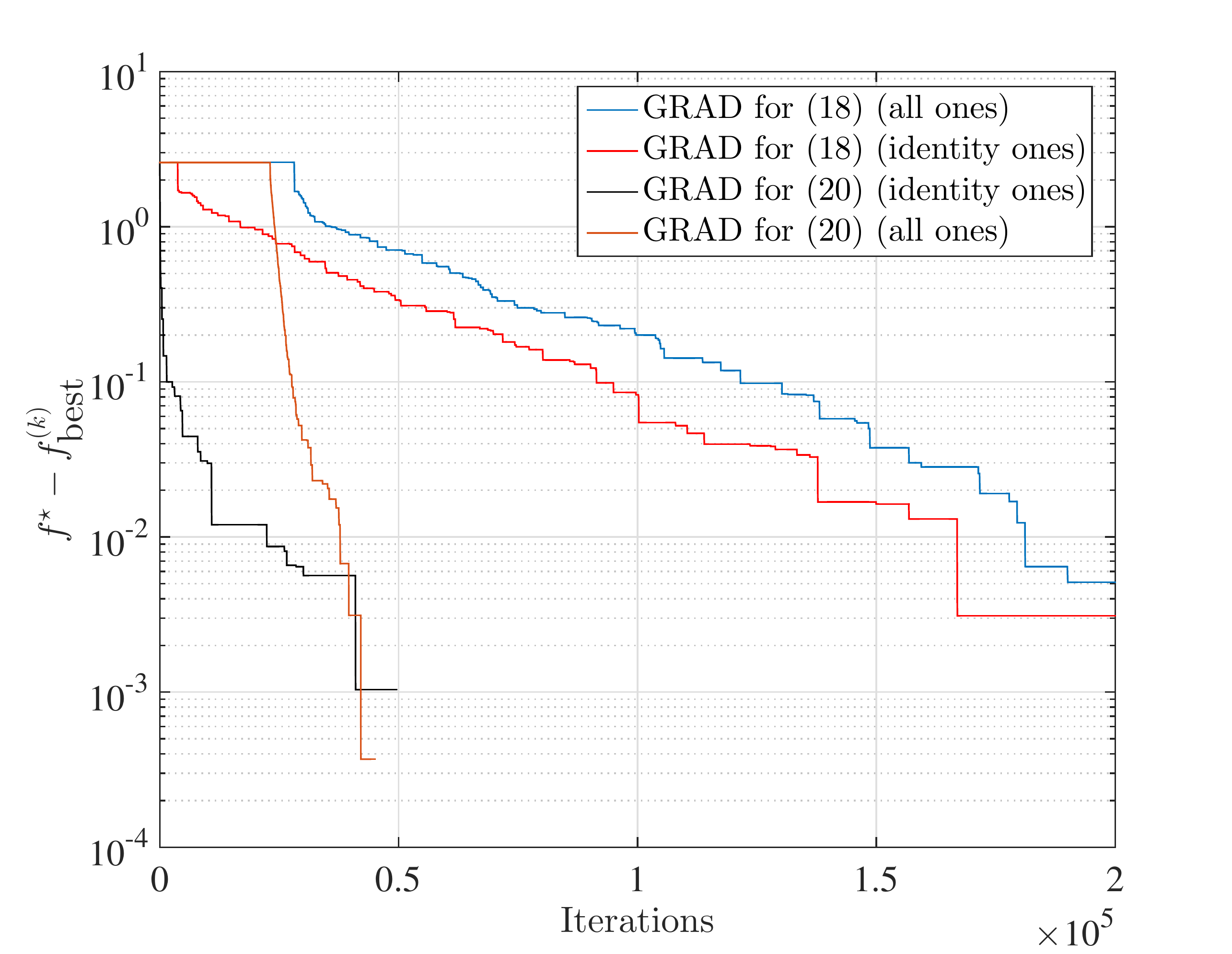}  
\caption{Convergence of the system sum rate vs iterations for a gradient approach for constrained optimization.}
\label{fig1c}
\vspace{-1mm}
\end{figure}

\begin{figure}
 \centering
\includegraphics[width=0.5\textwidth]{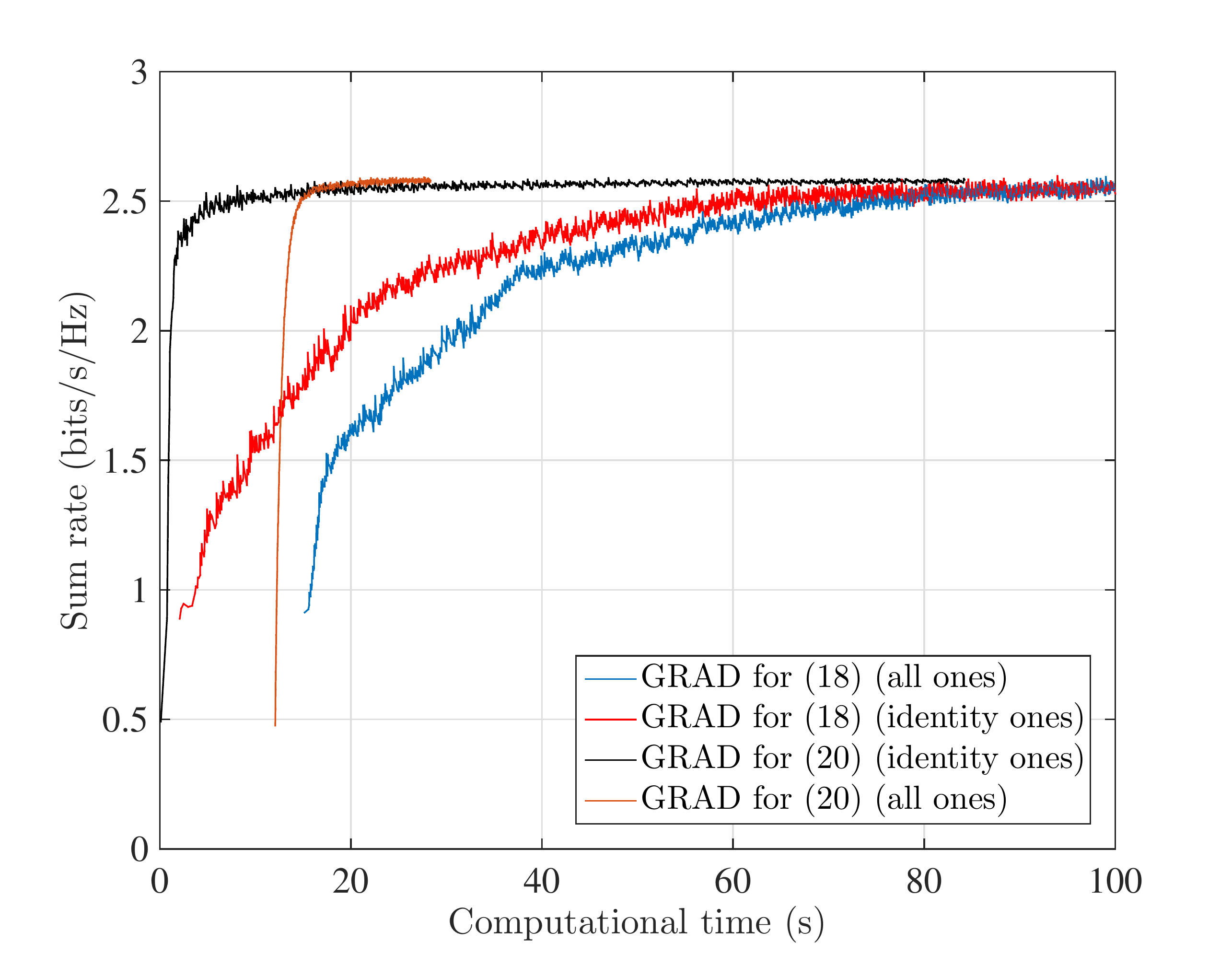}  
\caption{Convergence of the system sum rate vs computational time for a gradient approach for constrained optimization.}
\label{fig1d}
\vspace{-5mm}
\end{figure}
\begin{figure}
 \centering
\includegraphics[width=0.5\textwidth]{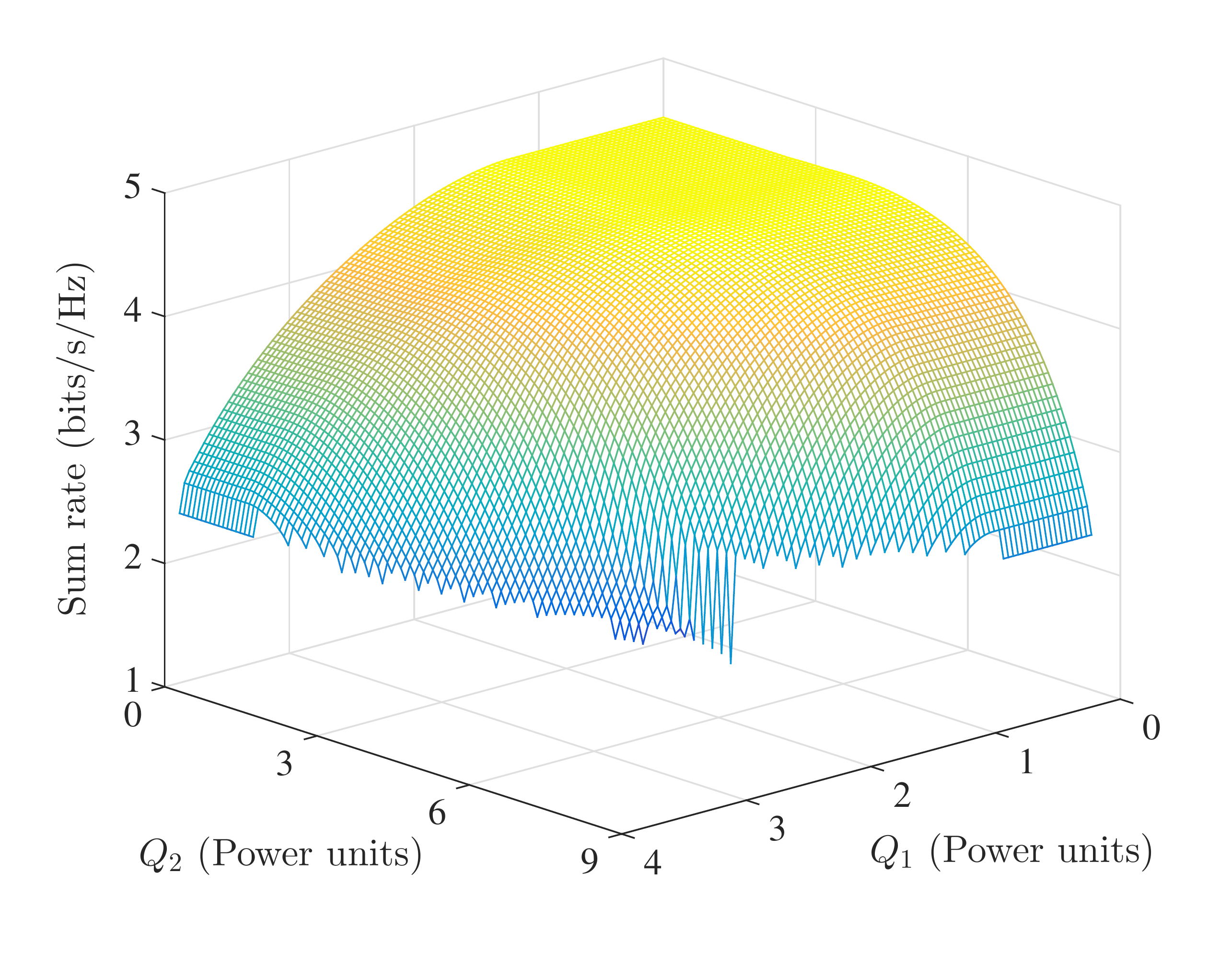}  
\caption{Rate-power surface for the MM method.}
\label{fig2}
\vspace{-6mm}
\end{figure}

\subsection{Performance Evaluation}
In this section, we evaluate the performance of the MM approach as compared to the classical BD strategy considered in the literature (see, for example, \cite{rub:13d}, \cite{zhang_m:10a}). In order to show how harvesting users at different distances affect the performance,
we have generated channel matrices with different norms. We would like to emphasize that, as the noise and channels are normalized, we will refer to the powers harvested by the receivers in terms of power units instead of Watts.

Figures \ref{fig2} and \ref{fig3} show the rate-power surface, that is, the multidimensional trade-off between the system sum rate and the powers to be collected by harvesting users (see \cite{rub:13d} for a formal definition of the rate-power surface). As we see, the MM approach outperforms the BD strategy in both terms, sum rate and harvested power. The maximum system sum rate obtained with the MM approach when $Q_1$ and $Q_2$ are set to 0 is $4.5$ bit/s/Hz, whereas the sum rate obtained with the BD approach is $2.75$ bit/s/Hz. The rate-power surfaces are generated by varying the values of $\{Q_j\}$ in problem \eqref{op:wet1} or, equivalently, by varying the values of $\{\alpha_j\}$ in problem \eqref{op:wetre}. A way to reduce the computational complexity associated with the generation of the rate-power surface is to use as an initialization point the solution that was obtained for the previous values of $\{Q_j\}$ or $\{\alpha_j\}$ to generate the new value of the curve \cite{boyd_admm}. Note, however, that the whole rate-power surface need not be generated for each transmission as it is just the representation of the existing rate-power tradeoff.

\begin{figure}
 \centering
\includegraphics[width=0.5\textwidth]{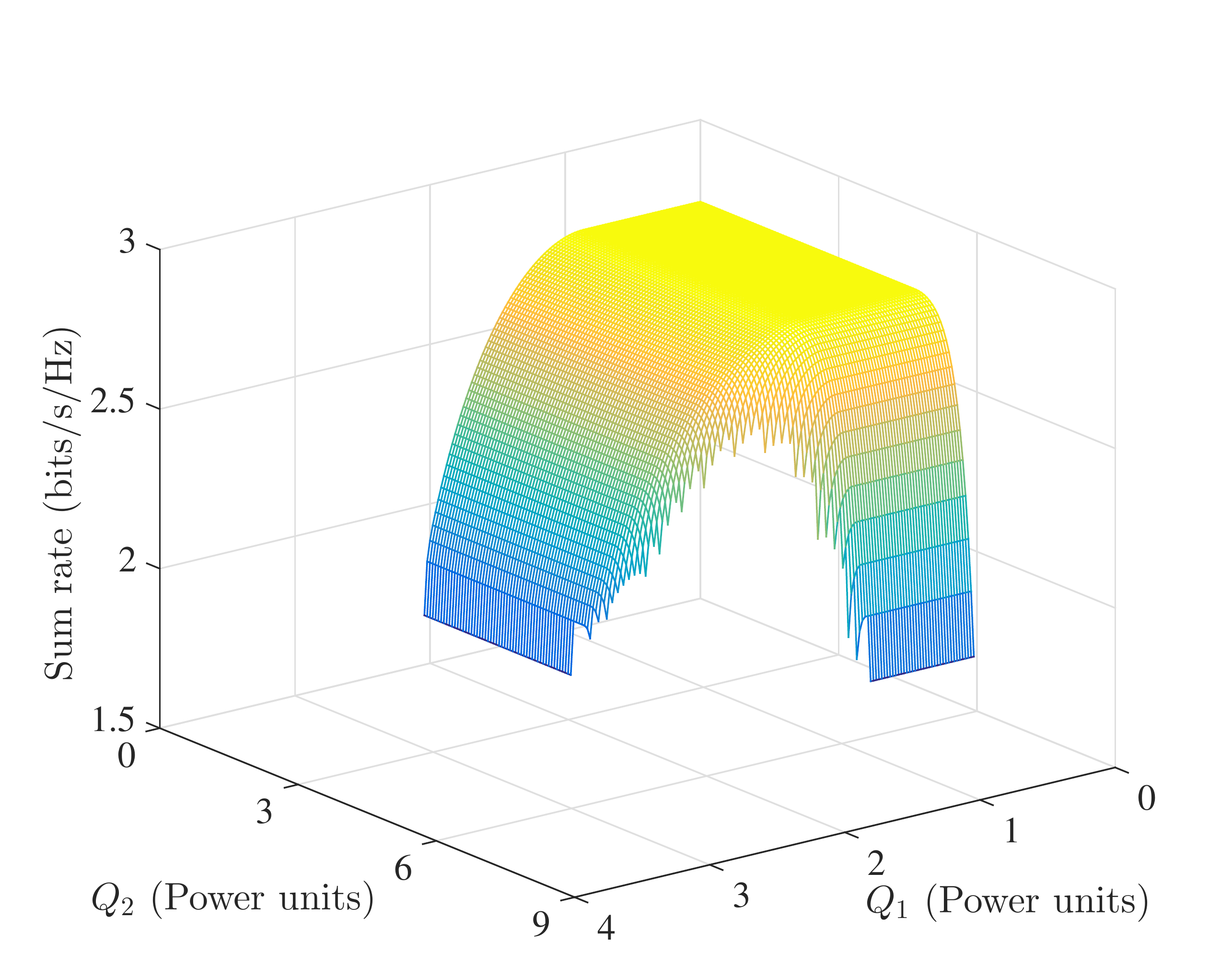}  
\caption{Rate-power surface for the BD method.}
\label{fig3}
\vspace{-4mm}
\end{figure}
%\begin{figure}
% \centering
%\includegraphics[width=0.8\textwidth]{Figuras/cut2}  
%\caption{Vertical cut of Rate-Power curves at $Q_1 = 1.5$ energy units.}
%\label{fig6}
%\end{figure}

\begin{figure}[t]
 \centering
\includegraphics[width=0.5\textwidth]{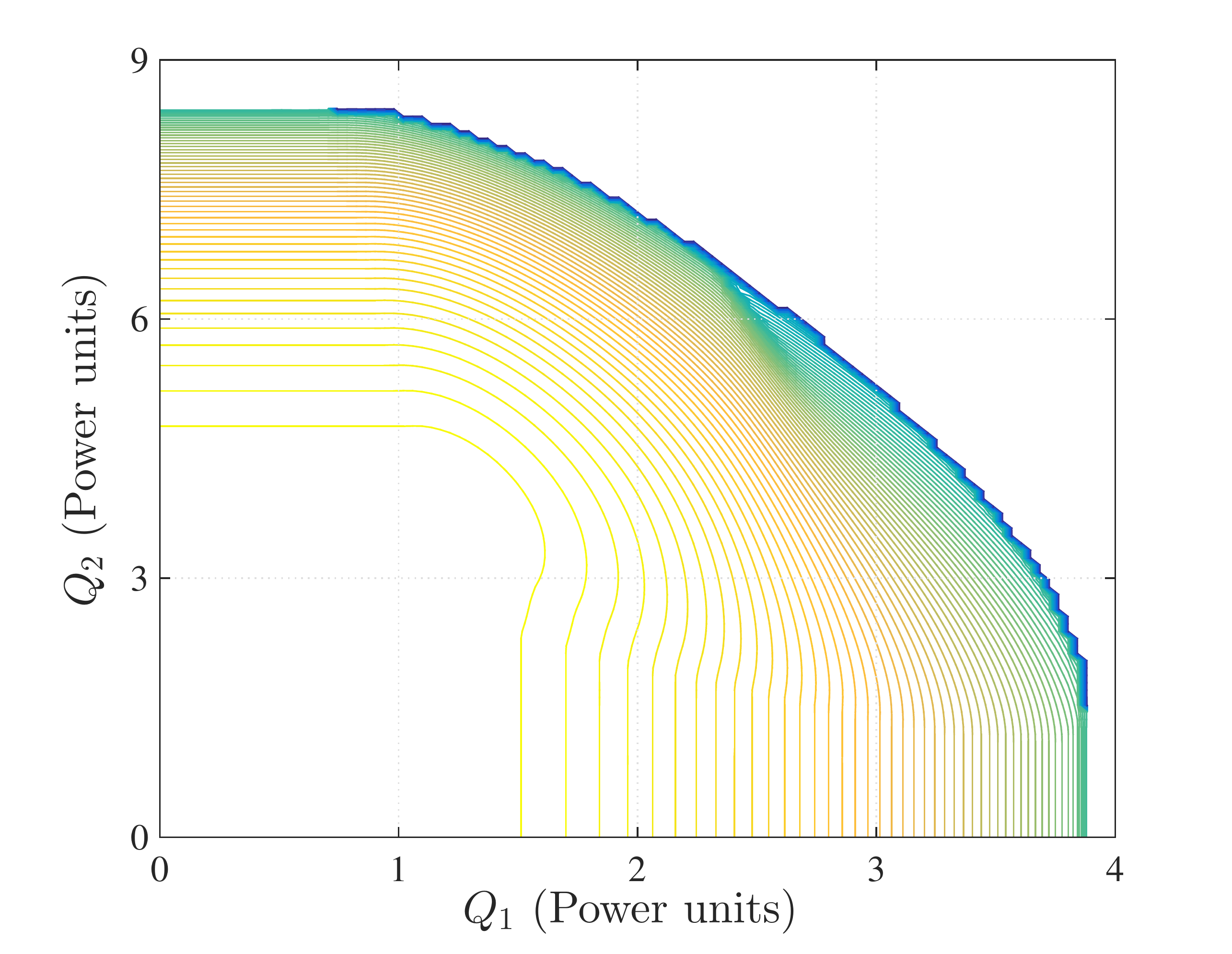}  
\caption{Contour of rate-power surface for the MM method.}
\label{fig4}
\vspace{-5mm}
\end{figure}

In order to clearly see the benefits in terms of collected power, Figures \ref{fig4} and \ref{fig5} show the contour plots of the previous 3D plots. We observe that users in the MM approach collect roughly $50\%$ more power than the power collected by users when applying the BD strategy.

\begin{figure}[t]
 \centering
\includegraphics[width=0.5\textwidth]{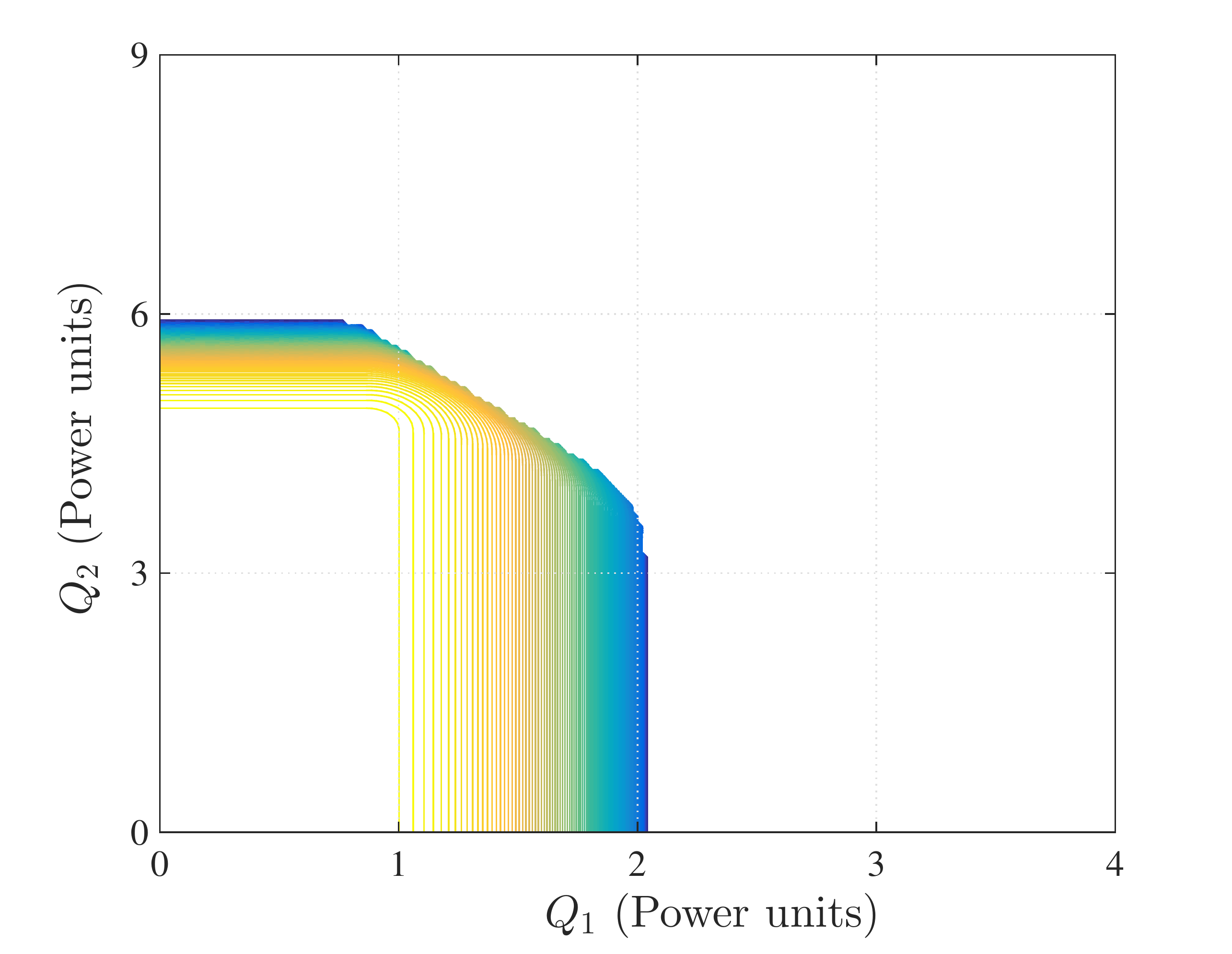}  
\caption{Contour of rate-power surface for the BD method.}
\label{fig5}
\vspace{-2mm}
\end{figure}

Finally, Figure \ref{fig7} presents the rate-region of the MM approach for different values of $\{Q_j\}$. The same value of $Q_j$ is set to the two harvesting users. In this case, we vary the values of $\omega_i$ to achieve the whole contour of the rate regions. We observe that, the larger the harvesting constraints, the smaller the rate-region, as expected. However, the relation between the harvesting constraints and the rate-region is not linear. As the harvesting constraints increase, a small change in the $\{Q_j\}$ produces a large reduction of the rate-region. This is because the 3D rate-power surfaces presented before are not planes.
\begin{figure}
 \centering
\includegraphics[width=0.44\textwidth]{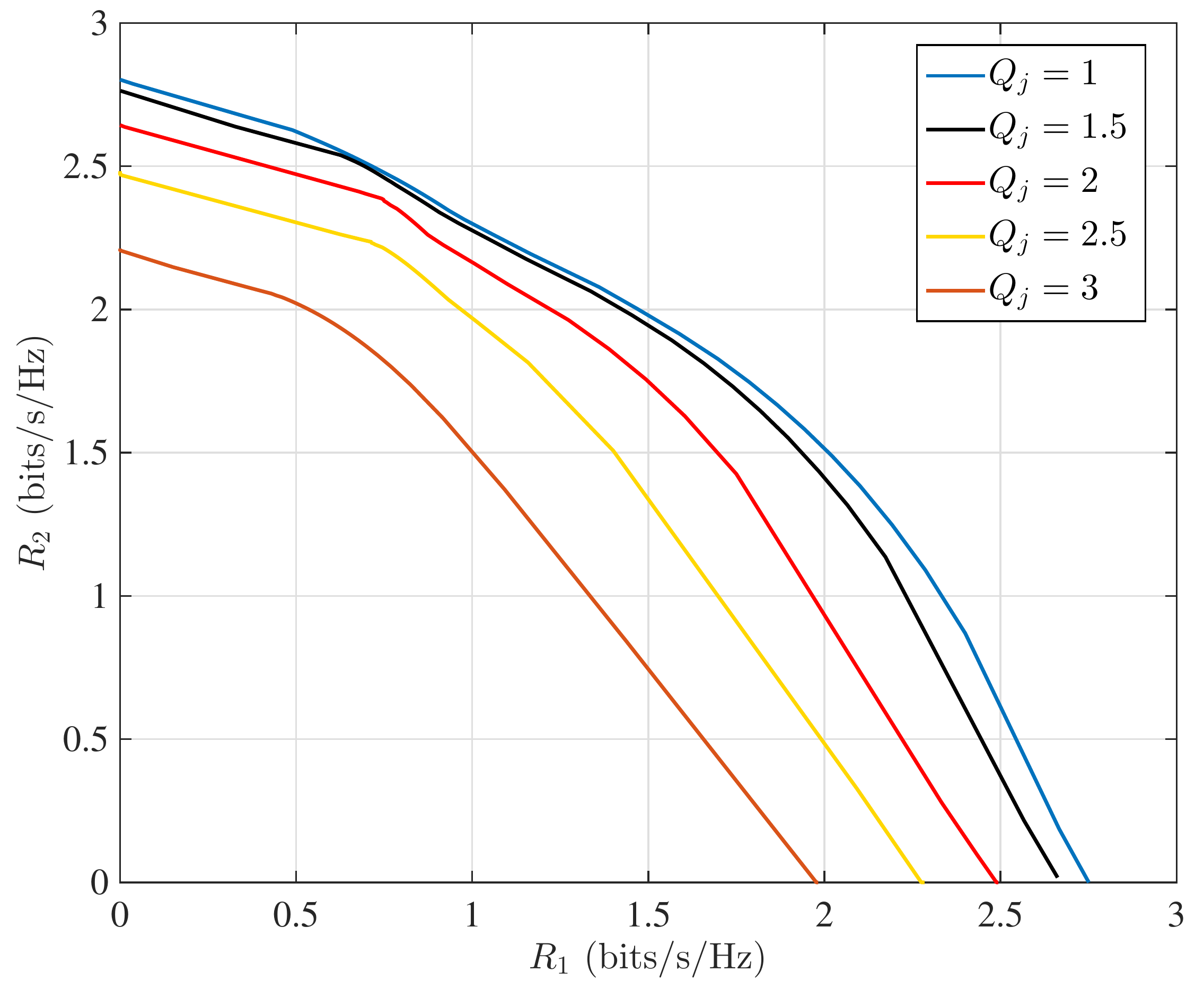}  
\caption{Rate region for different values of $Q_j$ (in power units).}
\label{fig7}
\vspace{-6mm}
\end{figure}

\section{Conclusions}
We have presented a method to solve the difficult nonconvex problem that arises in multiuser multi-stream broadcast MIMO SWIPT networks. We formulated the general SWIPT problem as a multi-objective optimization problem, in which rates and harvested powers were to be optimized simultaneously. Then, we proposed two different formulations to obtain solutions of the general multi-objective optimization problem depending on the desired level of control of the power to be harvested. In the first approach, the transmitter was able to control the specific amount of power to be harvested by each user whereas in the second approach only the proportions of power to be harvested among the different users could be controlled. Both (nonconvex) formulations were solved based on the MM approach. We derived a convex approximation for two nonconvex objectives and developed two different algorithms. Simulation results showed that the proposed methods outperform the classical BD in terms of both system sum rate and power collected by users by a factor of approximately $50\%$. Moreover, the computational time needed to achieve convergence was shown to be really low for the approach in which the transmitter could only control the proportion of powers to be harvested (around two orders of magnitude lower than a gradient-like approach).
\label{sec_con_mm}
%% APPENDIX %%%%%%%%%%%%%%%%%%%%%%%%%%%%%%%%%%%%%%%%%%%%%%%%%%%%%%%%%%%%%%%%%%%%%%%%%%%%%%%%%%%%%%%%%%%%%%%%%%%%%%%%%%%%%%%%%%%%%%%%%%%%%%%%%%%%%%%%%%%%%%%%%%%%%%%%%%%%%%%%%%%%%%%%%%%%%%%%%%%%%%%%%%%%%%%%%%%%%%%%%%%%%%%%%%%%%%%%%%%%%%%%%%%%%%%%%%%%%%%%%%%%%%%%%%%%%%%%%%%%%%%%%%%%%%%%%%%%%%%%%%%%%%%%%%%%%%%%%%%%%%%%%%%%%%%%%%%%%%%%%%%%%%%%%%%%%%%%%%%%%%%%%%%%%%%%%%%%%%%%%%%%%%%%%%%%%%%%%%%%%%%%%%%%%%%%%%%%%%%%%%%%%%%%%%%%%%%%%%%%%%%%%%%%%%%%%%%%%%%%%%%%%%%%%%%%%%%%%%%%%%%%%%%%%%%%%%%%%%%%%%%%%%%%%%%%%%%%%%%%%%%%%%%%%%%%%%%%%%%%%%%%%%%%%%%%%%%%%%%%%%%%%%%%%%%%%%%%%%%%%%%%%%%%%%%%%%%%%%%%%%%%%%%%%%%%%%%%%%%%%%%%%%%%%%%%%%%%%%%%%%%%%%%%%%%%%%%%%%%%%%%%%%%%%%%%%%%%%%%%%%%%%%%%%%%%%%%%%%%%%%%%%%%%%%%%%%%%%%%%%%%%%%%%%%%%%%%%%%%%%%%%%%%%%%%%%%%%%%%%%%%%%%%%%%%%%%%%%%%%%%%%%%%%%%%%%%%%%%%%%%%%%%%%%%%%%%%%%%%%%%%%%%%%%%%%%%%%%%%%%%%%%%%%%%%%%%%%%%%%%%%%%%%%%%%%%%%%%%%%%%%%%%%%%%%%%%%%%%%%%%%%%%%%%%%%%%%%%%%%%%%%%%%%%%%%%%%%%%%%%%%%%%%%%%

\appendices
\section{Benchmark Formulations and Algorithms}
\label{app_bench}
In this appendix, we are going to describe the benchmarks based on the works in \cite{Hong:16}, \cite{scutari:14}, and \cite{You:14}. We start with the benchmark for problem \eqref{op:wet1}.

Note that the upper bound $\hat{g}_i(\boldsymbol\Omega_i(\textbf{S}_{-i}),\boldsymbol\Omega_i^{(0)})$ can be used to build a lower bound of $f_0(\bar{\textbf{S}})$ that fulfills the four conditions $(\text{A}1) - (\text{A}4)$ presented before in Section \ref{sec_mm}.

By applying a successive approximation of $f_0(\cdot)$ through the application of the previous surrogate function, i.e., $\hat{f}_0(\textbf{S},\textbf{S}^{(k)}) =  \sum_{i\in\mathcal{U}_I} \omega_is_i(\textbf{S}) - \omega_i\hat{g}_i(\boldsymbol\Omega_i(\textbf{S}_{-i}),\boldsymbol\Omega_i^{(k)}) - \rho \left\|\textbf{S}_i - \textbf{S}^{(k)}_i\right\|_F^2$, where $\textbf{S}^{(k)} \triangleq (\textbf{S}_i^{(k)})_{\forall i \in\mathcal{U}_I}$, for different evaluation points, we obtain an iterative algorithm based on the MM approach that converges to a stationary point (or local optimum) of the original problem \eqref{op:wet1}. Note that we have considered a proximal-like term. Given this, the convex optimization problem to solve is
\begin{alignat}{2}
\mathop{\text{max}}_{\{\textbf{S}_i\}}& \quad \sum_{i\in\mathcal{U}_I} \omega_is_i(\textbf{S}) - \omega_i\hat{g}_i(\boldsymbol\Omega_i(\textbf{S}_{-i}),\boldsymbol\Omega_i^{(k)}) - \rho \left\|\textbf{S}_i - \textbf{S}^{(k)}_i\right\|_F^2 \nonumber\\
\textrm{s. t.} 
&  \quad \textbf{S} \in \mathcal{S}_1 \label{op:wet3}.
\end{alignat}
We must proceed iteratively until convergence is reached. The procedure is presented in Alg. \ref{alg1}.

Let us now continue with the benchmark for problem \eqref{op:wetre}. If we apply the bound from \eqref{taylor}, i.e., $\hat{g}_i(\boldsymbol\Omega_i(\textbf{S}_{-i}),\boldsymbol\Omega_i^{(0)})$, problem \eqref{op:wetre} can be solved by solving consecutively the following problem:
 \begin{alignat}{2}
 \mathop{\text{max}}_{\{\textbf{S}_i\}}& \quad\sum_{i\in\mathcal{U}_I} \omega_is_i(\textbf{S}) - \omega_i\hat{g}_i(\boldsymbol\Omega_i(\textbf{S}_{-i}),\boldsymbol\Omega_i^{(k)}) + \Tr(\textbf{R}_H\textbf{S}_i) \nonumber\\
&\quad- \rho \left\|\textbf{S}_i - \textbf{S}^{(k)}_i\right\|_F^2\label{op:wetre2p} \\
 \textrm{s. t.} 
 &  \quad \textbf{S}\in\mathcal{S}_2 \nonumber.
 \end{alignat}
As problem \eqref{op:wetre2p} is convex, the MM method can be invoked to obtain a local optimum of problem \eqref{op:wetre}, following the same procedure as we did before for problem \eqref{op:wet3}.

\section{Proof of Proposition \ref{prop_sf}}
\label{app2}
The proposed quadratic surrogate function of $s_i(\bar{\textbf{S}})$ has the following form:
\begin{eqnarray}
\hspace{-5mm}\hat{s}_i(\bar{\textbf{S}}, \bar{\textbf{S}}^{(0)}) &\triangleq& \log\det\left(\textbf{I} + \textbf{H}_i\bar{\textbf{S}}^{(0)}\textbf{H}_i^H\right) \nonumber\\
&&+ \text{Re}\Big\{\Tr\left(\textbf{G}_i\left(\bar{\textbf{S}} - \bar{\textbf{S}}^{(0)}\right)\right)\Big\} \nonumber\\
&&+ \Tr\left(\left(\bar{\textbf{S}} - \bar{\textbf{S}}^{(0)}\right)^H\textbf{M}_i \left(\bar{\textbf{S}} - \bar{\textbf{S}}^{(0)}\right)\right)\nonumber\label{sur_ap}\\
&\le&   \log\det\left(\textbf{I} + \textbf{H}_i\bar{\textbf{S}}\textbf{H}_i^H\right), \quad \forall  \bar{\textbf{S}},\, \bar{\textbf{S}}^{(0)} \in\mathcal{S}^{n_T}_+,
\end{eqnarray}
where matrices $\textbf{G}_i \in\mathbb{C}^{n_T\times n_T}$ and $\textbf{M}_i \in\mathbb{C}^{n_T\times n_T}$ need to be found such that conditions $(\text{A}1)$ through $(\text{A}4)$ are satisfied, and $\text{Re}\{x\}$ denotes the real part of $x$. Note that $(\text{A}1)$ and $(\text{A}4)$ are already satisfied. Only $(\text{A}2)$ and $(\text{A}3)$ must be ensured. 
\begin{algorithm}[t]
\centering
\begin{algorithmic}[1]
\caption{Algorithm for Solving Problem \eqref{op:wet1}}
\label{alg1}
\vspace{2 mm}
\STATE Initialize $\textbf{S}^{(0)} \in \mathcal{S}_1$. Set $k=0$\\
\STATE Repeat %(index $k$)
\STATE \quad Generate the $(k+1)$-th tuple $(\textbf{S}^\star_i)_{\forall i \in\mathcal{U}_I}$ by solving \eqref{op:wet3}\\
\STATE \quad Set $\textbf{S}^{(k+1)}_i = \textbf{S}^\star_i, \,\forall i \in\mathcal{U}_I$, and set $k = k +1$\\
\STATE  Until convergence is reached\\
\end{algorithmic}
\end{algorithm}

Let us start by proving condition $(\text{A}3)$. Let $\bar{\textbf{S}}^{(0)}$ and $\bar{\textbf{S}}^{(1)}$ be two positive semidefinite matrices, i.e, $\bar{\textbf{S}}^{(0)}$, $\bar{\textbf{S}}^{(1)} \in\mathcal{S}^{n_T}_+$. Then, the directional derivative of the surrogate function $\hat{s}_i(\bar{\textbf{S}}, \bar{\textbf{S}}^{(0)})$ in \eqref{sur_ap} at $\bar{\textbf{S}}^{(0)}$ with direction $\bar{\textbf{S}}^{(1)} - \bar{\textbf{S}}^{(0)}$ is given by:
\begin{equation}
\text{Re}\Big\{\Tr\left(\textbf{G}_i\left(\bar{\textbf{S}}^{(1)} - \bar{\textbf{S}}^{(0)}\right)\right)\Big\}.\label{2direc_A}
\end{equation}

%\begin{eqnarray}
%&& \lim_{\lambda\rightarrow 0} \frac{\Tr\left(\textbf{G}_i\left(\bar{\textbf{S}}^{(0)} + \lambda\left(\bar{\textbf{S}}^{(1)} - \bar{\textbf{S}}^{(0)}\right) - \bar{\textbf{S}}^{(0)}\right)\right) -  \Tr\left(\textbf{G}_i\left(\bar{\textbf{S}}^{(0)} - \bar{\textbf{S}}^{(0)}\right)\right)}{\lambda}\\
%&=& \lim_{\lambda\rightarrow 0} \frac{\lambda\Tr\left(\textbf{G}_i\left(\bar{\textbf{S}}^{(1)} - \bar{\textbf{S}}^{(0)}\right)\right)}{\lambda} = \Tr\left(\textbf{G}_i\left(\bar{\textbf{S}}^{(1)} - \bar{\textbf{S}}^{(0)}\right)\right).
%\end{eqnarray}
%Likewise, the directional derivative of the third term of \eqref{sur_ap} is given by
%\begin{eqnarray}
%&& \lim_{\lambda\rightarrow 0} \frac{\Tr\left(\left(\bar{\textbf{S}}^{(0)} + \lambda\left(\bar{\textbf{S}}^{(1)} - \bar{\textbf{S}}^{(0)}\right) - \bar{\textbf{S}}^{(0)}\right)^H\textbf{M}_i\left(\bar{\textbf{S}}^{(0)} + \lambda\left(\bar{\textbf{S}}^{(1)} - \bar{\textbf{S}}^{(0)}\right) - \bar{\textbf{S}}^{(0)}\right)\right)}{\lambda} \nonumber\\
%& & -  \frac{\Tr\left(\left(\bar{\textbf{S}}^{(0)} - \bar{\textbf{S}}^{(0)}\right)^H\textbf{M}_i\left(\bar{\textbf{S}}^{(0)} - \bar{\textbf{S}}^{(0)}\right)\right)}{\lambda}\\
%&=& \lim_{\lambda\rightarrow 0} \frac{\lambda^2 \Tr\left(\left(\bar{\textbf{S}}^{(1)} - \bar{\textbf{S}}^{(0)}\right)^H\textbf{M}_i\left(\bar{\textbf{S}}^{(1)} - \bar{\textbf{S}}^{(0)}\right)\right)}{\lambda} = 0.
%\end{eqnarray}
Now, let us compute the directional derivative of the term $\log\det\left(\textbf{I} + \textbf{H}_i\bar{\textbf{S}}\textbf{H}_i^H\right)$:
\begin{equation}
\Tr\left(\textbf{H}^H_i\left(\textbf{I} + \textbf{H}_i\bar{\textbf{S}}^{(0)}\textbf{H}_i^H\right)^{-1}\textbf{H}_i\left(\bar{\textbf{S}}^{(1)} - \bar{\textbf{S}}^{(0)}\right)\right),\label{2direc_B}
\end{equation}
where we have used $\text{d} \log\det(\textbf{X}) = \Tr(\textbf{X}^{-1}\text{d}\textbf{X})$ \cite{magnus}. Hence, by applying condition $(\text{A}3)$, the two directional derivatives \eqref{2direc_A} and \eqref{2direc_B} must be equal, from which we are able to identify matrix $\textbf{G}_i$ as
\begin{equation}
\textbf{G}_i = \textbf{H}_i^H\left(\textbf{I} + \textbf{H}_i\bar{\textbf{S}}^{(0)}\textbf{H}_i^H\right)^{-1}\textbf{H}_i, \quad \quad \textbf{G}_i = \textbf{G}_i^H.
\end{equation}

\begin{figure*}
\begin{eqnarray}
&& \log\det\left(\textbf{I} + \textbf{H}_i\bar{\textbf{S}}^{(0)}\textbf{H}_i^H\right) + \mu \Tr\left(\textbf{G}_i\left(\bar{\textbf{S}}^{(1)} - \bar{\textbf{S}}^{(0)}\right)\right) + \mu^2 \Tr\left(\left(\bar{\textbf{S}}^{(1)} - \bar{\textbf{S}}^{(0)}\right)^H\textbf{M}_i \left(\bar{\textbf{S}}^{(1)} - \bar{\textbf{S}}^{(0)}\right)\right)\nonumber\\
&&\le   \log\det\left(\textbf{I} + \textbf{H}_i\left(\bar{\textbf{S}}^{(0)} + \mu \left(\bar{\textbf{S}}^{(1)} - \bar{\textbf{S}}^{(0)}\right)\right)\textbf{H}_i^H\right), \quad \forall  \bar{\textbf{S}}^{(1)},\,\bar{\textbf{S}}^{(0)} \in\mathcal{S}^{n_T}_+, \,\forall \mu \in [0,1].\label{lin_ap}
\end{eqnarray}
\hrulefill
\begin{eqnarray}
2\Tr\Bigg(\Big(\bar{\textbf{S}}^{(1)} &-& \bar{\textbf{S}}^{(0)}\Big)^H\textbf{M}_i \left(\bar{\textbf{S}}^{(1)} - \bar{\textbf{S}}^{(0)}\right)\Bigg) \le \frac{\partial^2}{\partial \mu^2}  \log\det\left(\textbf{I} + \textbf{H}_i\left(\bar{\textbf{S}}^{(0)} + \mu \left(\bar{\textbf{S}}^{(1)} - \bar{\textbf{S}}^{(0)}\right)\right)\textbf{H}_i^H\right)\Bigg|_{\forall  \bar{\textbf{S}}^{(1)},\bar{\textbf{S}}^{(0)} \in\mathcal{S}^{n_T}_+, \,\forall \mu \in [0,1]}.\label{tayl_b}
\end{eqnarray}
\hrulefill
\begin{eqnarray}
\frac{\partial}{\partial \mu} \log\det\Big(\textbf{I} &+& \textbf{H}_i\Big(\bar{\textbf{S}}^{(0)} + \mu \left(\bar{\textbf{S}}^{(1)} - \bar{\textbf{S}}^{(0)}\right)\Big)\textbf{H}_i^H\Big)\nonumber\\
&=& \Tr\left(\left(\textbf{I} + \textbf{H}_i\Big(\bar{\textbf{S}}^{(0)} + \mu \left(\bar{\textbf{S}}^{(1)} - \bar{\textbf{S}}^{(0)}\right)\Big)\textbf{H}_i^H\right)^{-1}\textbf{H}_i\left(\bar{\textbf{S}}^{(1)} - \bar{\textbf{S}}^{(0)}\right)\textbf{H}_i^H\right)\label{first_der},
\end{eqnarray}
\hrulefill
\begin{eqnarray}
\frac{\partial^2}{\partial \mu^2} \log\det\Big(\textbf{I} &+& \textbf{H}_i\Big(\bar{\textbf{S}}^{(0)} + \mu \left(\bar{\textbf{S}}^{(1)} - \bar{\textbf{S}}^{(0)}\right)\Big)\textbf{H}_i^H\Big)  = - \Tr\left(\textbf{A}^{-1}_i\textbf{H}_i\left(\bar{\textbf{S}}^{(1)} - \bar{\textbf{S}}^{(0)}\right)\textbf{H}_i^H\textbf{A}^{-1}_i\textbf{H}_i\left(\bar{\textbf{S}}^{(1)} - \bar{\textbf{S}}^{(0)}\right)\textbf{H}_i^H\right),\label{sec_d1}
\end{eqnarray}
\hrulefill
\end{figure*}

Note that as matrix $\textbf{G}_i$ is hermitian, the real operator is no longer needed since the trace of the product of two hermitian matrices is real. In order to prove condition $(\text{A}2)$, it suffices to show that for each linear cut in any direction, the surrogate function is a lower bound. Let $\bar{\textbf{S}} = \bar{\textbf{S}}^{(0)} + \mu \left(\bar{\textbf{S}}^{(1)} - \bar{\textbf{S}}^{(0)}\right)$, $\forall \mu \in [0,1]$. Then, it suffices to show \eqref{lin_ap}. Since the left hand side of \eqref{lin_ap} is concave with respect to $\mu$, a sufficient condition is that the second derivative of the left hand side of \eqref{lin_ap} must be lower than or equal to the second derivative of the right hand side of \eqref{lin_ap} for any $\mu\in[0,1]$ and any $\bar{\textbf{S}}^{(1)},\,\bar{\textbf{S}}^{(0)}\in\mathcal{S}_+^{n_T}$, thus, \eqref{tayl_b} must hold.

Let us compute the second derivative of the right hand side of \eqref{tayl_b}. The first derivative is given by \eqref{first_der} and the second derivative is given by \eqref{sec_d1}, where we have used the identity $\text{d}\textbf{X}^{-1} = - \textbf{X}^{-1}\text{d}\textbf{X}\textbf{X}^{-1}$ \cite{magnus} and matrix $\textbf{A}_i  \in\mathbb{C}^{n_{R_i}\times n_{R_i}}$ is defined as $\textbf{A}_i = \textbf{I} + \textbf{H}_i\Big(\bar{\textbf{S}}^{(0)} + \mu \left(\bar{\textbf{S}}^{(1)} - \bar{\textbf{S}}^{(0)}\right)\Big)\textbf{H}_i^H$.

We need to manipulate the previous expressions. To this end, let us define matrix $\textbf{P}_i = \textbf{H}^H_i\textbf{A}^{-1}_i\textbf{H}_i \in\mathbb{C}^{n_T\times n_T}$ and let us vectorize the result found in \eqref{sec_d1}:
\begin{eqnarray}
&\hspace{-5mm}\Tr&\Bigg(\textbf{P}_i\Big(\bar{\textbf{S}}^{(1)} - \bar{\textbf{S}}^{(0)}\Big)\textbf{P}_i\left(\bar{\textbf{S}}^{(1)} - \bar{\textbf{S}}^{(0)}\right)\Bigg) \nonumber \\
&\hspace{-12mm}=&\hspace{-5mm} \text{vec}\left(\left(\bar{\textbf{S}}^{(1)} - \bar{\textbf{S}}^{(0)}\right)^T\right)^T\hspace{-2mm}\left(\textbf{I}\otimes\textbf{P}^T_i\textbf{P}_i\right)\text{vec}\left(\bar{\textbf{S}}^{(1)} - \bar{\textbf{S}}^{(0)}\right)\label{vec4},
\end{eqnarray}
where we have used the following properties: $\Tr(\textbf{A}\textbf{B}) = \text{vec}(\textbf{A}^T)^T\text{vec}(\textbf{B})$, $\text{vec}(\textbf{A}\textbf{B})^T = \text{vec}(\textbf{A})^T(\textbf{I}\otimes\textbf{B})$, $\text{vec}(\textbf{A}\textbf{B}) = (\textbf{I}\otimes\textbf{A})\text{vec}(\textbf{B})$, and $(\textbf{A}\otimes\textbf{B})(\textbf{C}\otimes\textbf{D}) = (\textbf{A}\textbf{C})\otimes(\textbf{B}\textbf{D})$. Let us now vectorize the left hand side of \eqref{tayl_b}:
\begin{eqnarray}
&\hspace{-7mm}2&\hspace{-1mm}\Tr\Bigg(\Big(\bar{\textbf{S}} - \bar{\textbf{S}}^{(0)}\Big)^H\textbf{M}_i \Big(\bar{\textbf{S}} - \bar{\textbf{S}}^{(0)}\Big)\Bigg) \nonumber\\
&\hspace{-7mm}=&\hspace{0mm} 2\text{vec}\left(\left(\bar{\textbf{S}}^{(1)} - \bar{\textbf{S}}^{(0)}\right)^T\right)^T(\textbf{I}\otimes\textbf{M}_i)\text{vec}\left(\bar{\textbf{S}}^{(1)} - \bar{\textbf{S}}^{(0)}\right),\label{vec5}
\end{eqnarray}
where in \eqref{vec5} we have used the fact that $\bar{\textbf{S}}^{(1)} - \bar{\textbf{S}}^{(0)}$ is hermitian and $\Tr(\textbf{A}\textbf{B}\textbf{C}) = \text{vec}(\textbf{A}^T)^T(\textbf{I}\otimes\textbf{B})\text{vec}(\textbf{C})$. Finally, we end up with the relation from forcing that \eqref{vec5} must be lower than or equal to \eqref{vec4}. This relation can be expressed as given by \eqref{eq_53}. 
\begin{figure*}
\begin{equation}
2\text{vec}\left(\left(\bar{\textbf{S}}^{(1)} - \bar{\textbf{S}}^{(0)}\right)^T\right)^T\left[(\textbf{I}\otimes\textbf{M}_i) + \frac{1}{2}\left(\textbf{I}\otimes\textbf{P}^T_i\textbf{P}_i\right)\right]\text{vec}\left(\bar{\textbf{S}}^{(1)} - \bar{\textbf{S}}^{(0)}\right) \le 0.\label{eq_53}
\end{equation}
\hrulefill
\end{figure*}
A sufficient condition for expression \eqref{eq_53} is:
\begin{equation}
(\textbf{I}\otimes\textbf{M}_i) + \frac{1}{2}\left(\textbf{I}\otimes\textbf{P}^T_i\textbf{P}_i\right) = \textbf{I} \otimes \left(\textbf{M}_i+ \frac{1}{2}\textbf{P}^T_i\textbf{P}_i\right)\preceq 0,
\end{equation}
which means that 
\begin{equation}
\textbf{M}_i+ \frac{1}{2}\textbf{P}^T_i\textbf{P}_i\preceq 0.
\end{equation}
Now, if we set $\textbf{M}_i = \alpha\textbf{I}$ (note that this is a particular simple solution), we have that
\begin{equation}
\alpha \le - \frac{1}{2} \lambda_{\max}\left(\textbf{P}^T_i\textbf{P}_i\right),
\end{equation}
where $\lambda_{\max}(\textbf{X})$ is the maximum eigenvalue of matrix $\textbf{X}$. Now, let us introduce the following result:\\

\begin{theorem}[\cite{bwang:92}]
Let $\textbf{A}$, $\textbf{B} \in \mathbb{C}^{n\times n}$, assume that $\textbf{A}$ is positive definite, and assume that $\textbf{B}$ is positive definite. Let $\lambda_i(\textbf{A})$ be the $i$-th eigenvalue of matrix $\textbf{A}$ such that $\lambda_1(\textbf{A}) \ge \lambda_2(\textbf{A})\ge \dots \ge \lambda_n(\textbf{A})$. Then, for all $i,j,k\in\{1,\dots,n\}$ such that $j+k\le i+1$,
\begin{equation}
\lambda_i(\textbf{A}\textbf{B}) \le \lambda_j(\textbf{A})\lambda_k(\textbf{B}).
\end{equation}
In particular, for all $i=1,\dots,n$,
\begin{equation}
\lambda_i(\textbf{A})\lambda_n(\textbf{B}) \le \lambda_i(\textbf{A}\textbf{B}) \le \lambda_i(\textbf{A})\lambda_1(\textbf{B}).
\end{equation}
\end{theorem}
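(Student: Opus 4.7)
The plan is to first reduce the problem to a Hermitian matrix by noting that $\mathbf{A}\mathbf{B}$ is similar to $\mathbf{A}^{1/2}\mathbf{B}\mathbf{A}^{1/2}$, so the two have identical spectra; since $\mathbf{A}\succ 0$ and $\mathbf{B}\succ 0$, the matrix $\mathbf{A}^{1/2}\mathbf{B}\mathbf{A}^{1/2}$ is Hermitian positive definite and its eigenvalues admit the Courant--Fischer min--max characterization. The target inequality $\lambda_i(\mathbf{A}\mathbf{B})\le \lambda_j(\mathbf{A})\lambda_k(\mathbf{B})$ for $j+k\le i+1$ will therefore be established by exhibiting an $(n-i+1)$-dimensional subspace on which the Rayleigh quotient of $\mathbf{A}^{1/2}\mathbf{B}\mathbf{A}^{1/2}$ is bounded by $\lambda_j(\mathbf{A})\lambda_k(\mathbf{B})$.

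The construction of that subspace is the core of the argument. First, let $V_1$ be the span of the eigenvectors of $\mathbf{A}$ associated with $\lambda_j(\mathbf{A}),\lambda_{j+1}(\mathbf{A}),\dots,\lambda_n(\mathbf{A})$, so that $\dim V_1=n-j+1$ and $\mathbf{x}^H\mathbf{A}\mathbf{x}\le\lambda_j(\mathbf{A})\mathbf{x}^H\mathbf{x}$ for every $\mathbf{x}\in V_1$. Second, let $W_2$ be the analogous span for $\mathbf{B}$ (so that $\mathbf{y}^H\mathbf{B}\mathbf{y}\le\lambda_k(\mathbf{B})\mathbf{y}^H\mathbf{y}$ for $\mathbf{y}\in W_2$) and define $V_2=\mathbf{A}^{-1/2}W_2$, which has dimension $n-k+1$. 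For $\mathbf{x}\in V_1\cap V_2$, putting $\mathbf{y}=\mathbf{A}^{1/2}\mathbf{x}\in W_2$ gives
\begin{equation}
\mathbf{x}^H\mathbf{A}^{1/2}\mathbf{B}\mathbf{A}^{1/2}\mathbf{x}=\mathbf{y}^H\mathbf{B}\mathbf{y}\le\lambda_k(\mathbf{B})\mathbf{y}^H\mathbf{y}=\lambda_k(\mathbf{B})\mathbf{x}^H\mathbf{A}\mathbf{x}\le\lambda_j(\mathbf{A})\lambda_k(\mathbf{B})\mathbf{x}^H\mathbf{x}.
\end{equation}
A standard dimension count gives $\dim(V_1\cap V_2)\ge\dim V_1+\dim V_2-n=(n-j+1)+(n-k+1)-n=n-j-k+2\ge n-i+1$, where the last inequality uses $j+k\le i+1$. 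Choosing any $(n-i+1)$-dimensional subspace $U\subseteq V_1\cap V_2$ and applying Courant--Fischer,
\begin{equation}
\lambda_i(\mathbf{A}^{1/2}\mathbf{B}\mathbf{A}^{1/2})=\min_{\dim U=n-i+1}\;\max_{\mathbf{0}\ne\mathbf{x}\in U}\frac{\mathbf{x}^H\mathbf{A}^{1/2}\mathbf{B}\mathbf{A}^{1/2}\mathbf{x}}{\mathbf{x}^H\mathbf{x}}\le\lambda_j(\mathbf{A})\lambda_k(\mathbf{B}),
\end{equation}
which yields the main inequality.

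The ``in particular'' statement then follows by specialization and duality. The upper bound $\lambda_i(\mathbf{A}\mathbf{B})\le\lambda_i(\mathbf{A})\lambda_1(\mathbf{B})$ is obtained by setting $j=i$, $k=1$, which trivially satisfies $j+k=i+1$. For the lower bound, apply the upper inequality just proved to the positive definite matrices $\mathbf{A}^{-1}$ and $\mathbf{B}^{-1}$, and use $\lambda_{\ell}(\mathbf{X}^{-1})=1/\lambda_{n-\ell+1}(\mathbf{X})$ together with $\mathrm{spec}(\mathbf{A}^{-1}\mathbf{B}^{-1})=\mathrm{spec}((\mathbf{A}\mathbf{B})^{-1})$; taking $j'=i$, $k'=n$ in the resulting dual indexing delivers $\lambda_i(\mathbf{A})\lambda_n(\mathbf{B})\le\lambda_i(\mathbf{A}\mathbf{B})$.

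The only genuinely nontrivial step is identifying the right second subspace $V_2=\mathbf{A}^{-1/2}W_2$: the natural guess (using eigenvectors of $\mathbf{B}$ directly) does not combine with the $\mathbf{A}$-eigenspace in a useful way, whereas pulling back $W_2$ through $\mathbf{A}^{-1/2}$ is precisely what makes the Rayleigh-quotient chain $\mathbf{x}^H\mathbf{A}^{1/2}\mathbf{B}\mathbf{A}^{1/2}\mathbf{x}\le\lambda_k(\mathbf{B})\mathbf{x}^H\mathbf{A}\mathbf{x}$ close. Once this idea is in place, the rest reduces to linear-algebraic bookkeeping (dimension counting plus min--max).
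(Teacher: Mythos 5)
Your proof is correct. Note that the paper does not prove this result at all --- it is quoted verbatim from the cited reference \cite{bwang:92} and used as a black box to bound $\lambda_{\max}(\textbf{P}_i^T\textbf{P}_i)$ --- so there is no in-paper argument to compare against; you have supplied a self-contained proof where the authors rely on a citation. Your Courant--Fischer argument is the standard and natural one: the reduction of $\textbf{A}\textbf{B}$ to the Hermitian $\textbf{A}^{1/2}\textbf{B}\textbf{A}^{1/2}$, the pulled-back subspace $V_2=\textbf{A}^{-1/2}W_2$, and the dimension count $n-j-k+2\ge n-i+1$ are all sound, and the duality step for the lower bound checks out (with the dual indices $j'=n-i+1$, $k'=1$ one has $j'+k'=i'+1$, so the hypothesis is met with equality). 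The only cosmetic remark is that your phrase ``taking $j'=i$, $k'=n$ in the resulting dual indexing'' would read more cleanly if you stated the primal indices applied to $\textbf{A}^{-1}$, $\textbf{B}^{-1}$ explicitly and then translated back.
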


Thanks to the previous result, $\alpha \le - \frac{1}{2} \lambda^2_{\max}\left(\textbf{P}_i\right)$. Now, let the singular value decomposition of $\textbf{H}_i$ be $\textbf{H}_i = \textbf{U}_i\boldsymbol\Sigma_i\textbf{V}_i^H$. From this, we can upper bound $\lambda_{\max}\left(\textbf{P}_i\right) = \lambda_{\max}\left(\textbf{H}_i^H\textbf{A}_i^{-1}\textbf{H}_i\right) = \lambda_{\max}\left(\boldsymbol\Sigma_i\textbf{V}_i^H\textbf{A}_i^{-1}\textbf{V}_i\boldsymbol\Sigma_i\right) \le \sigma_{\max}^2(\textbf{H}_i)\lambda^{-1}_{\min}(\textbf{A}_i)$, where $\sigma_{\max}(\textbf{X})$ is the maximum singular value of matrix $\textbf{X}$. Because matrix $\textbf{A}$ is positive definite with $\lambda_{\min}(\textbf{A}_i) \ge 1$, we can conclude that
\begin{equation}
\alpha \le -\frac{1}{2} \sigma_{\max}^4(\textbf{H}_i), 
\end{equation}
and thus, a possible matrix $\textbf{M}_i$ satisfying conditions $(\text{A}1)-(\text{A}4)$ is finally
\begin{equation}
\textbf{M}_i = -\frac{1}{2} \sigma_{\max}^4(\textbf{H}_i) \textbf{I} = -\frac{1}{2} \lambda_{\max}^2(\textbf{H}^H_i\textbf{H}_i) \textbf{I}.
\end{equation}

\section{Proof of Proposition \ref{prop_p1}}
\label{app3}
Let us start by vectorizing the surrogate function in \eqref{surg_a}:
\begin{eqnarray}
\hat{R}_i(\textbf{S},\textbf{S}^{(0)})  &=& \hat{s}_i(\bar{\textbf{S}}, \bar{\textbf{S}}^{(0)}) - \hat{g}_i(\boldsymbol\Omega_i(\textbf{S}_{-i}),\boldsymbol\Omega_i^{(0)})\nonumber\\ 
&=&\text{vec}\left(\bar{\textbf{S}}^T\right)^T\left(\textbf{I}\otimes\textbf{M}_i\right)\text{vec}\left(\bar{\textbf{S}}\right)  + \textbf{e}^T_i\text{vec}\left(\bar{\textbf{S}}\right) \nonumber\\
&&+ {\textbf{r}}^T_i\text{vec}\left({\textbf{S}}_i\right) + \kappa_2,\label{59}
\end{eqnarray}
where $\textbf{e}_i = \text{vec}\left(\textbf{E}_i^T\right) \in\mathbb{C}^{n_Tn_T\times 1}$, ${\textbf{r}}_i = \text{vec}\left(\textbf{R}_i^T\right) \in\mathbb{C}^{n_Tn_T\times 1}$, and $\kappa_2$ contains some constant terms that do not depend on $\{\textbf{S}_i\}$. Let ${\textbf{s}} = \left[\text{vec}(\textbf{S}_1)^T \text{vec}(\textbf{S}_2)^T \dots \text{vec}(\textbf{S}_{|\mathcal{U}_I|})^T\right]^T \in\mathbb{C}^{n_Tn_T|\mathcal{U}_I|\times 1}$. Note that $\text{vec}\left(\bar{\textbf{S}}\right) = \textbf{T}{\textbf{s}}$, where $\textbf{T} \in\mathbb{C}^{n_Tn_T\times n_Tn_T|\mathcal{U}_I|}$ is composed of $|\mathcal{U}_I|$ identity matrices of size $n_Tn_T\times n_Tn_T$, i.e.,  $\textbf{T}= \left[\textbf{I}\,\, \textbf{I}\,\,\dots \,\,\textbf{I}\right]$. Now, we can rewrite \eqref{59} as (omitting the constant terms)
\begin{eqnarray}
\hspace{-5mm}\hat{R}_i(\textbf{S},\textbf{S}^{(0)}) \hspace{-1mm}&=& \hspace{-1mm}{\textbf{s}}^H\textbf{T}^H\left(\textbf{I}\otimes\textbf{M}_i\right)\textbf{T}{\textbf{s}} + {\textbf{e}}_i^T\textbf{T}{\textbf{s}}  + {\textbf{r}}_i^T\text{vec}\left({\textbf{S}}_i\right).
\end{eqnarray}

We know proceed to formulate the objective function (denoted by $\bar{f}_0(\textbf{S},\textbf{S}^{(0)})$ of problem \eqref{op:wet1} but substituting the bound that we just computed and considering the proximal term. If we incorporate all the terms (but omitting the constant ones) we have
\begin{eqnarray}
&\bar{f}_0&(\textbf{S},\textbf{S}^{(0)}) = \nonumber\\
&&\sum_{i\in\mathcal{U}_I} \omega_i\Bigg({\textbf{s}}^H\textbf{T}^H\left(\textbf{I}\otimes\textbf{M}_i\right)\textbf{T}{\textbf{s}}  + {\textbf{e}}_i^T\textbf{T}{\textbf{s}}  + {\textbf{r}}_i^T\text{vec}\left({\textbf{S}}_i\right)\Bigg) \nonumber\\
&& - \rho \left\|\textbf{S}_i- \textbf{S}^{(0)}_i\right\|_F^2\\
&=& {\textbf{s}}^H\textbf{T}^H\tilde{\textbf{M}}\textbf{T}{\textbf{s}}   + {\tilde{\textbf{e}}}^T\textbf{T}{\textbf{s}} + \hat{\textbf{r}}^T{\textbf{s}} - \rho {\textbf{s}}^H{\textbf{s}} + \rho {\textbf{s}}^{(0),H}{\textbf{s}} \nonumber\\
&&+ \rho{\textbf{s}}^H{\textbf{s}}^{(0)} - \rho{\textbf{s}}^{(0),H}{\textbf{s}}^{(0)},
\end{eqnarray}
where $\tilde{\textbf{M}} = \sum_{i\in\mathcal{U}_I} \omega_i \left(\textbf{I}\otimes\textbf{M}_i\right) \in\mathbb{C}^{n_Tn_T\times n_Tn_T}$, ${\tilde{\textbf{e}}} =  \sum_{i\in\mathcal{U}_I} \omega_i {\textbf{e}}_i$, $\hat{\textbf{r}} = \left[{\textbf{r}}_1^T \,{\textbf{r}}_2^T\,\dots\, {\textbf{r}}_{|\mathcal{U}_I|}^T\right]^T \in\mathbb{C}^{n_Tn_T|\mathcal{U}_I|\times 1}$, and ${\textbf{s}}^{(0)} = \left[\text{vec}(\textbf{S}^{(0)}_1)^T \text{vec}(\textbf{S}^{(0)}_2)^T \dots \text{vec}(\textbf{S}^{(0)}_{|\mathcal{U}_I|})^T\right]^T\in\mathbb{C}^{n_Tn_T|\mathcal{U}_I|\times 1}$. Now taking into account that the objective function $\bar{f}_0(\textbf{S},\textbf{S}^{(0)})$ must be real and combining terms (omitting terms that do not depend on ${\textbf{s}}$) we obtain
\begin{eqnarray}
\bar{f}_0(\textbf{S},\textbf{S}^{(0)}) =  {\textbf{s}}^H\textbf{C}{\textbf{s}} + \textbf{b}^T{\textbf{s}} +{\textbf{s}}^H\textbf{b}^*, \label{ob_fun}
\end{eqnarray}
where $\textbf{b}^T =  \frac{1}{2}{\tilde{\textbf{e}}}^T\textbf{T} + \frac{1}{2}\hat{\textbf{r}}^T  +\rho{\textbf{s}}^{(0),H}\in\mathbb{C}^{1\times n_Tn_T|\mathcal{U}_I|}$ and matrix $\textbf{C}$ is $\textbf{C} = \textbf{T}^H\tilde{\textbf{M}}\textbf{T} - \rho\textbf{I}\in\mathbb{C}^{n_Tn_T|\mathcal{U}_I|\times n_Tn_T|\mathcal{U}_I|}$. For convenient purposes, let us change the sign of $\bar{f}_0(\textbf{S},\textbf{S}^{(0)})$ such that $\bar{\bar{f}}_0(\textbf{S},\textbf{S}^{(0)}) = - \bar{f}_0(\textbf{S},\textbf{S}^{(0)}) =  {\textbf{s}}^H\tilde{\textbf{C}}{\textbf{s}} - \textbf{b}^T{\textbf{s}} - {\textbf{s}}^H\textbf{b}^*$, where $\tilde{\textbf{C}} = - \textbf{C} \succeq 0$. Finally, we can equivalently rewrite the objective function as the following expression (with this new reformulation, the objective is to minimize $\bar{\bar{f}}_0(\textbf{S},\textbf{S}^{(0)})$ instead of maximizing it):
\begin{equation}
\bar{\bar{f}}_0(\textbf{S},\textbf{S}^{(0)}) = \|\tilde{\textbf{C}}^{\frac{1}{2}}{\textbf{s}} - \textbf{c}\|^2_2,
\end{equation}
where 
\begin{equation}
\textbf{c} = \tilde{\textbf{C}}^{-\frac{1}{2}}\textbf{b}^*\in\mathbb{C}^{n_Tn_T|\mathcal{U}_I|\times 1}.\label{eq_c}
\end{equation}
Note that the term $\textbf{c}^H\textbf{c}$ does not affect the optimum value of the optimization variables as this term does not depend on ${\textbf{s}}$. Now, we can reformulate the optimization problem presented in \eqref{op:wet1} as 
\begin{alignat}{2}
\mathop{\text{minimize}}_{\{\textbf{S}_i\}, \,{\textbf{s}}}& \quad  \|\tilde{\textbf{C}}^{\frac{1}{2}}{\textbf{s}} - \textbf{c}\|^2_2\label{op:wetN1} \\
\textrm{subject to} 
& \quad C1: \textbf{T}_i{\textbf{s}} = \text{vec}\left(\textbf{S}_i\right),&&\quad \forall i \in\mathcal{U}_I\nonumber\\
&  \quad C2: \textbf{S} \in\mathcal{S}_1 \nonumber,
\end{alignat}
where $\textbf{T}_i = [\underbrace{\textbf{0}, \textbf{0}, \dots, \textbf{0}}_{i-1}, \textbf{I}, \textbf{0},\dots,\textbf{0}]\in\mathbb{R}^{n_Tn_T\times n_Tn_T|\mathcal{U}_I|}$ is composed of zero matrices of dimension $n_Tn_T\times n_Tn_T$ with an identity matrix at the $i$-th position. Problem \eqref{op:wetN1} can be further reformulated as
\begin{alignat}{2}
\mathop{\text{minimize}}_{\{\textbf{S}_i\}, \,{\textbf{s}},\,t}& \quad  t \label{op:wetN2} \\
\textrm{subject to} 
& \quad  C1: \|\tilde{\textbf{C}}^{\frac{1}{2}}{\textbf{s}} - \textbf{c}\|_2 \le t \nonumber\\
& \quad C2: \textbf{T}_i{\textbf{s}} = \text{vec}\left(\textbf{S}_i\right),&&\quad \forall i \in\mathcal{U}_I\nonumber\\
&  \quad C3: \textbf{S} \in\mathcal{S}_1 \nonumber,
\end{alignat}
and, finally, as the following SDP optimization problem
\begin{alignat}{2}
\mathop{\text{minimize}}_{\{\textbf{S}_i\}, \,{\textbf{s}},\,t}& \quad  t \label{op:wetN3} \\
\textrm{subject to} 
& \quad C1: \left[ \begin{array}{cc}
     {t}\textbf{I} & \tilde{\textbf{C}}^{\frac{1}{2}}{\textbf{s}} - \textbf{c} \\ \left(\tilde{\textbf{C}}^{\frac{1}{2}}{\textbf{s}} - \textbf{c}\right)^H & 1 \end{array} \right] \succeq 0 \nonumber \\
& \quad C2: \textbf{T}_i{\textbf{s}} = \text{vec}\left(\textbf{S}_i\right),&&\quad \forall i \in\mathcal{U}_I\nonumber\\
&  \quad C3: \textbf{S} \in\mathcal{S}_1 \nonumber.
\end{alignat}

\section{Proof of Proposition \ref{prop_sf2}}
\label{app4}
The proposed quadratic surrogate function of $s_i(\textbf{S})$ has the following form:
\begin{eqnarray}
\hat{s}_i(\textbf{S}&\hspace{-4mm},&\hspace{-4mm}\textbf{S}^{(0)}) \triangleq \log\det\left(\textbf{I} + \textbf{H}_i\sum_{k\in\mathcal{U}_I}\textbf{S}_k^{(0)}\textbf{H}_i^H\right) \label{sur_ap2b}\\
&&+ \sum_{\ell\in\mathcal{U}_I}\text{Re}\Big\{\Tr\left(\textbf{G}_{\ell i}\left(\textbf{S}_\ell - \textbf{S}_\ell^{(0)}\right)\right)\Big\} \nonumber\\
& & + \sum_{\ell\in\mathcal{U}_I}\Tr\left(\left(\textbf{S}_\ell - \textbf{S}_\ell^{(0)}\right)^H\textbf{M}_{\ell i} \left(\textbf{S}_\ell - \textbf{S}_\ell^{(0)}\right)\right)\nonumber \\
&\le&   \log\det\left(\textbf{I} + \textbf{H}_i\sum_{k\in\mathcal{U}_I}\textbf{S}_k\textbf{H}_i^H\right), \quad \forall  \textbf{S}_\ell,\, \textbf{S}_\ell^{(0)} \in\mathcal{S}^{n_T}_+,\nonumber
\end{eqnarray}
where matrices $\textbf{G}_i \in\mathbb{C}^{n_T\times n_T}$ and $\textbf{M}_i \in\mathbb{C}^{n_T\times n_T}$ need to be found such that conditions $(\text{A}1)$ through $(\text{A}4)$ are satisfied. Note that $(\text{A}1)$ and $(\text{A}4)$ are already satisfied. Only $(\text{A}2)$ and $(\text{A}3)$ must be ensured. 
\begin{figure*}
\setcounter{equation}{66}
\begin{eqnarray}
\Tr\Bigg(\textbf{H}^H_i\Bigg(\textbf{I} + \textbf{H}_i\sum_{k\in\mathcal{U}_I}\textbf{S}_k^{(0)}\textbf{H}_i^H\Bigg)^{-1}&&\hspace{-6mm}\textbf{H}_i\left(\sum_{\ell\in\mathcal{U}_I} \left({\textbf{S}}_\ell^{(1)} - {\textbf{S}}_\ell^{(0)}\right)\right)\Bigg) \nonumber\\
&=& \sum_{\ell\in\mathcal{U}_I}\Tr\left(\textbf{H}^H_i\left(\textbf{I} + \textbf{H}_i\sum_{k\in\mathcal{U}_I}\textbf{S}_k^{(0)}\textbf{H}_i^H\right)^{-1}\textbf{H}_i\left({\textbf{S}}_\ell^{(1)} - {\textbf{S}}_\ell^{(0)}\right)\right)\label{eq68}
\end{eqnarray}
\hrulefill
\setcounter{equation}{69}
\begin{equation}
\text{vec}\left(\left(\sum_{\ell\in\mathcal{U}_I}\left({\textbf{S}}^{(1)}_\ell - {\textbf{S}}^{(0)}_\ell\right)\right)^T\right)^T\left(\textbf{I}\otimes\textbf{P}^T_i\textbf{P}_i\right)\text{vec}\left(\sum_{\ell\in\mathcal{U}_I}\left({\textbf{S}}^{(1)}_\ell - {\textbf{S}}^{(0)}_\ell\right)\right), \label{sec_rhs}
\end{equation}
\hrulefill
\end{figure*}
Let us start with condition $(\text{A}3)$. Let ${\textbf{S}}_\ell^{(0)}$, ${\textbf{S}}_\ell^{(1)} \in\mathcal{S}^{n_T}_+$, $\forall \ell$. Then, the directional derivative of the surrogate function $\hat{s}_i(\textbf{S},\textbf{S}^{(0)})$ in \eqref{sur_ap2b} at $\textbf{S}_\ell^{(0)}$ with direction $\textbf{S}_\ell^{(1)} - {\textbf{S}}_\ell^{(0)}$ is given by
\setcounter{equation}{65}
\begin{equation}
\sum_{\ell \in\mathcal{U}_I} \text{Re}\Big\{\Tr\left(\textbf{G}_{\ell i}\left({\textbf{S}}_\ell^{(1)} - {\textbf{S}}_\ell^{(0)}\right)\right)\Big\},\label{eq66}
\end{equation}
and the directional derivative of the right hand side of \eqref{sur_ap2b} at $\textbf{S}_\ell^{(0)}$ with direction $\textbf{S}_\ell^{(1)} - {\textbf{S}}_\ell^{(0)}$ is given by \eqref{eq68}. From \eqref{eq66} and \eqref{eq68}, we identify the matrices $\textbf{G}_{\ell i}$ as
\setcounter{equation}{67}
\begin{equation}
\textbf{G}_{\ell i} = \textbf{H}^H_i\left(\textbf{I} + \textbf{H}_i\sum_{k\in\mathcal{U}_I}\textbf{S}_k^{(0)}\textbf{H}_i^H\right)^{-1}\textbf{H}_i, \quad \textbf{G}_{\ell i} = \textbf{G}^H_{\ell i},
\end{equation}
where we find that all matrices $\textbf{G}_{\ell i}$ for a given user $i$ can be the same, $\textbf{G}_i = \textbf{G}_{\ell i}$ (i.e., they do not depend on $\ell$). 

Now, we seek to find matrices $\{\textbf{M}_{\ell i}\}$ based on condition $(\text{A}2)$. To this end, we follow the same procedure presented before. We make linear cuts in each possible direction and apply the condition over the second derivative (see \eqref{tayl_b}). The second derivative of the left hand side of \eqref{sur_ap2b} is given by
\begin{eqnarray}
&\hspace{-3mm}&2\sum_{\ell\in\mathcal{U}_I}\Tr\left(\left(\textbf{S}_\ell^{(1)} - \textbf{S}_\ell^{(0)}\right)^H\textbf{M}_{\ell i} \left(\textbf{S}_\ell^{(1)} - \textbf{S}_\ell^{(0)}\right)\right) = \\
&\hspace{-3mm}&2\sum_{\ell\in\mathcal{U}_I}\text{vec}\left(\left(\textbf{S}_\ell^{(1)} - \textbf{S}_\ell^{(0)}\right)^T\right)^T\left(\textbf{I}\otimes\textbf{M}_{\ell i}\right)\text{vec}\left(\textbf{S}_\ell^{(1)} - \textbf{S}_\ell^{(0)}\right),\nonumber
\end{eqnarray}
and the second derivative of the right hand side is given by \eqref{sec_rhs}, where $\textbf{P}_i = \textbf{H}_i^H\left(\textbf{I} + \textbf{H}_i\Big(\sum_{\ell\in\mathcal{U}_I}\left({\textbf{S}}^{(0)}_\ell + \mu \left({\textbf{S}}^{(1)}_\ell - {\textbf{S}}^{(0)}_\ell\right)\right)\Big)\textbf{H}_i^H\right)^{-1}\textbf{H}_i$, being constant $\mu \in [0,1]$. Now, let ${\textbf{s}} = \left[\text{vec}\left(\textbf{S}_1^{(1)}-\textbf{S}_1^{(0)}\right)^T  \cdots \text{vec}\left(\textbf{S}_{|\mathcal{U}_I|}^{(1)}-\textbf{S}_{|\mathcal{U}_I|}^{(0)}\right)^T\right]^T$ and 
let us introduce the following block diagonal matrix
\begin{equation}
\setcounter{equation}{71}
\tilde{\textbf{M}}_{i} = \left[\begin{array}{cccc}
\textbf{I}\otimes\textbf{M}_{1i} &\textbf{0}  &\hdots &\textbf{0}  \\
\textbf{0} &\textbf{I}\otimes\textbf{M}_{2i}& & \vdots  \\
\vdots & &\ddots & \textbf{0} \\
\textbf{0}  &\hdots & \textbf{0}  &\textbf{I}\otimes\textbf{M}_{|\mathcal{U}_I|i}\\
\end{array}\right].
\end{equation}
Then we have that the following condition should be fulfilled:
\begin{equation}
2{\textbf{s}}^H\tilde{\textbf{M}}_i{\textbf{s}} + {\textbf{s}}^H \textbf{T}^H \left(\textbf{I}\otimes\textbf{P}^T_i\textbf{P}_i\right)\textbf{T} {\textbf{s}} \le 0,
\end{equation}
which means that 
\begin{equation}
\tilde{\textbf{M}}_i + \frac{1}{2}\textbf{T}^H\left(\textbf{I}\otimes\textbf{P}^T_i\textbf{P}_i\right)\textbf{T}\preceq 0.\label{eq75}
\end{equation}
%Note that matrix $\textbf{T}^H\left(\textbf{I}\otimes\textbf{P}^T_i\textbf{P}_i\right)\textbf{T}$ is 
%\begin{equation}
%\textbf{T}^H\left(\textbf{I}\otimes\textbf{P}^T_i\textbf{P}_i\right)\textbf{T} = \left[\begin{array}{cccc}
%\textbf{I}\otimes\textbf{P}^T_i\textbf{P}_i &\textbf{I}\otimes\textbf{P}^T_i\textbf{P}_i  &\hdots &\textbf{I}\otimes\textbf{P}^T_i\textbf{P}_i \\
%\textbf{I}\otimes\textbf{P}^T_i\textbf{P}_i &\textbf{I}\otimes\textbf{P}^T_i\textbf{P}_i& & \vdots  \\
%\vdots & &\ddots & \\
%\textbf{I}\otimes\textbf{P}^T_i\textbf{P}_i  &\hdots &  &\textbf{I}\otimes\textbf{P}^T_i\textbf{P}_i\\
%\end{array}\right].
%\end{equation}
Note that the particular structure of matrix $\textbf{T}^H\left(\textbf{I}\otimes\textbf{P}^T_i\textbf{P}_i\right)\textbf{T}$ is given by
\begin{equation}
\textbf{T}^H\left(\textbf{I}\otimes\textbf{P}^T_i\textbf{P}_i\right)\textbf{T} = \left[\begin{array}{cccc}
\textbf{I}\otimes\textbf{P}^T_i\textbf{P}_i   &\hdots &\textbf{I}\otimes\textbf{P}^T_i\textbf{P}_i \\
\textbf{I}\otimes\textbf{P}^T_i\textbf{P}_i  &   \\
\vdots & \ddots & \vdots\\
\textbf{I}\otimes\textbf{P}^T_i\textbf{P}_i  &\hdots  &\textbf{I}\otimes\textbf{P}^T_i\textbf{P}_i\\
\end{array}\right],
\end{equation}

From the previous conditions we can see that all matrices $\textbf{M}_{\ell i}$ will be the same for user $i$, i.e., $\textbf{M}_{\ell i} = \textbf{M}_i, \,\forall \ell$. Now if we choose the particular structure $\textbf{M}_i = \alpha_i\textbf{I}$, then condition \eqref{eq75} is equivalent to
\begin{equation}
\alpha_i\textbf{I} + \frac{1}{2}\textbf{T}^H\left(\textbf{I}\otimes\textbf{P}^T_i\textbf{P}_i\right)\textbf{T}\preceq 0.\label{77}
\end{equation}
Now, condition \eqref{77} is equivalent to
\begin{eqnarray}
\alpha_i\textbf{g}^H\textbf{g} &\le& - \frac{1}{2}\textbf{g}^H\textbf{T}^H\left(\textbf{I}\otimes\textbf{P}^T_i\textbf{P}_i\right)\textbf{T}\textbf{g}, \quad \forall \textbf{g} \quad \Longrightarrow\\
\alpha_i\textbf{g}^H\textbf{g} &\le& - \frac{1}{2} \|\textbf{T}\textbf{g}\|_2^2\lambda_{\max}\left(\textbf{I}\otimes\textbf{P}^T_i\textbf{P}_i\right), \quad \forall \textbf{g}\quad  \Longrightarrow\\
\alpha_i\textbf{g}^H\textbf{g} &\le& - \frac{1}{2} \|\textbf{T}\textbf{g}\|_2^2\lambda_{\max}\left(\textbf{P}^T_i\textbf{P}_i\right), \quad \forall \textbf{g}.\label{80}
\end{eqnarray}
Now, the term $\|\textbf{T}\textbf{g}\|_2^2$ can be further simplified. Based on the structure of matrix $\textbf{T}$, we have that 
\begin{eqnarray}
&\hspace{-3mm}\|&\hspace{-2mm}\textbf{T}\textbf{g}\|_2^2 =\nonumber\\ 
&&\hspace{-1mm}\sum_{i=1}^{n_Tn_T} |\textbf{g}_{i} + \textbf{g}_{i+n_Tn_T+1} + \ldots + \textbf{g}_{i+n_Tn_T(|\mathcal{U}_I|-1)+1} |^2\\
&\hspace{-4mm}\le&\hspace{-1mm}  \sum_{i=1}^{n_Tn_T} ||\mathcal{U}_I| \max\{\textbf{g}_{i}, \ldots, \textbf{g}_{i+n_Tn_T(|\mathcal{U}_I|-1)+1}\}|^2\\
&\hspace{-4mm}\le& \sum_{i=1}^{n_Tn_T} |\mathcal{U}_I|^2 \left(|\textbf{g}_{i}|^2 + \ldots + |\textbf{g}_{i+n_Tn_T(|\mathcal{U}_I|-1)+1}|^2 \right)\\
&\hspace{-4mm}=& |\mathcal{U}_I|^2 \sum_{i=1}^{n_Tn_T|\mathcal{U}_I|} |\textbf{g}_i|^2 = |\mathcal{U}_I|^2\|\textbf{g}\|_2^2.
\end{eqnarray}
Thus, a sufficient condition to fulfill \eqref{80} is 
\begin{equation}
\alpha_i\|\textbf{g}\|_2^2 \le- \frac{1}{2} |\mathcal{U}_I|^2\|\textbf{g}\|_2^2\lambda_{\max}\left(\textbf{P}^T_i\textbf{P}_i\right), \quad \forall \textbf{g},
\end{equation}
and, finally,
\begin{equation}
\alpha_i \le - \frac{1}{2} |\mathcal{U}_I|^2 \lambda_{\max}\left(\textbf{P}^T_i\textbf{P}_i\right) \le - \frac{1}{2} |\mathcal{U}_I|^2\lambda_{\max}^2(\textbf{H}^H_i\textbf{H}_i). 
\end{equation}
Hence, a possible matrix $\textbf{M}_i$ satisfying assumptions $(\text{A}1)-(\text{A}4)$ is, finally,
\begin{equation}
\textbf{M}_i = -\frac{1}{2} |\mathcal{U}_I|^2\lambda_{\max}^2(\textbf{H}^H_i\textbf{H}_i) \textbf{I}.
\end{equation}

%\section{Proof of Proposition \ref{prop_p2}}
%\label{app5}

\bibliographystyle{IEEEtran}
\bibliography{referencias}

\end{document}